\documentclass[sigplan,10pt]{acmart}\settopmatter{printfolios=true,printccs=false,printacmref=false}


\usepackage{header}
\usepackage{code}
\usepackage{defs}
\usepackage{enumitem}
         
\acmConference[PL'17]{ACM SIGPLAN Conference on Programming Languages}{January 01--03, 2017}{New York, NY, USA}
\acmYear{2017}
\acmISBN{} 
\acmDOI{} 
\startPage{1}

\setcopyright{none}             

\bibliographystyle{ACM-Reference-Format}
\citestyle{acmnumeric}   

\begin{document}

\newif\ifsupp
\supptrue

\title{Verifying Semantic Conflict-Freedom in Three-Way Program Merges}

\author{Marcelo Sousa}
\affiliation{
  \institution{University of Oxford}
  \country{United Kingdom}
}
\email{marcelo.sousa@cs.ox.ac.uk}
\author{Isil Dillig}
\affiliation{
  \institution{University of Texas at Austin}
  \country{United States}
}
\email{isil@cs.utexas.edu}
\author{Shuvendu Lahiri}
\affiliation{
  \institution{Microsoft Research}
  \country{United States}
}
\email{shuvendu.lahiri@microsoft.com}

\begin{abstract}
Even though many programmers rely on 3-way 
merge tools to integrate changes from different branches, such tools 
 can introduce subtle bugs in the integration process.
This paper aims to mitigate this problem by defining a semantic notion of 
\emph{conflict-freedom}, which ensures that the merged program does not 
introduce new unwanted behaviors. 
We also show how to verify this property using a novel, compositional algorithm 
that combines
lightweight dependence analysis for shared program fragments and precise relational 
reasoning for the modifications.
%
%
%
%
We evaluate our tool called \toolname\ on 52 real-world merge scenarios obtained from Github and compare the results against a textual merge tool. The experimental results demonstrate the benefits of our approach over syntactic conflict-freedom and indicate that \toolname\ is both precise and practical.
%
%
\end{abstract}

\maketitle
\section{Introduction}
\label{sec:intro}

Developers who edit different branches of a source code repository
rely on 3-way merge tools (like {\tt \small git-merge} or {\tt \small kdiff3})
to automatically merge their changes.
Since the vast majority of these tools are oblivious
to program semantics and resolve conflicts using syntactic criteria,
they may introduce bugs in the merge process.
For example, many people speculate that Apple's infamous {\tt \small goto fail} SSL bug
was introduced  due to an erroneous program merge~\cite{goto-bug1,goto-bug2,goto-bug3}.

To see how bugs may be introduced in the merge process, consider the
simple \emph{base program} shown in Figure~\ref{fig:ex-basic} together
with its two \emph{variants} $A$ and $B$.\footnote{The example is inspired by the Apple SSL bug that resulted from duplicate goto statements.}
Here, both $A$ and $B$ modify the original program by incrementing
variable $\mytt{x}$ by $1$. For instance, such a situation may arise
in practice when two independent developers simultaneously fix the
same bug in different locations of the original program.
Since both variants effectively make the same change, the correct merge
should be either $A$ or $B$.
However, running a 3-way merge tool (in this case, {\tt \small kdiff3}) on
these programs succeeds without any warnings and generates the incorrect merge shown on the right hand
side of Figure~\ref{fig:ex-basic}. 
Since this program is clearly different than what either developer 
intended, we see that a bug was introduced during the merge.


This paper takes a step towards eliminating bugs that arise
due to 3-way program merges by automatically verifying \emph{semantic conflict-freedom}, a notion inspired by 
earlier work on program integration~\cite{hpr, Yang90}. To motivate what we mean by semantic conflict-freedom, 
consider a base program $P$, two variants $A, B$, and a merge candidate $M$. Intuitively, semantic conflict freedom requires that, if variant $A$ (resp. $B$) disagrees with $P$ on the value of some program variable $v$, then the merge candidate $M$ should agree with $A$ (resp. $B$) on the value of $v$.  In addition to ensuring that the merge candidate  does not introduce new behavior that is not present in either of the variants, conflict freedom also ensures that variants $A$ and $B$ do not make changes that are semantically incompatible with each other.

\begin{figure}[!t]
\begin{center}
\includegraphics[scale=0.32]{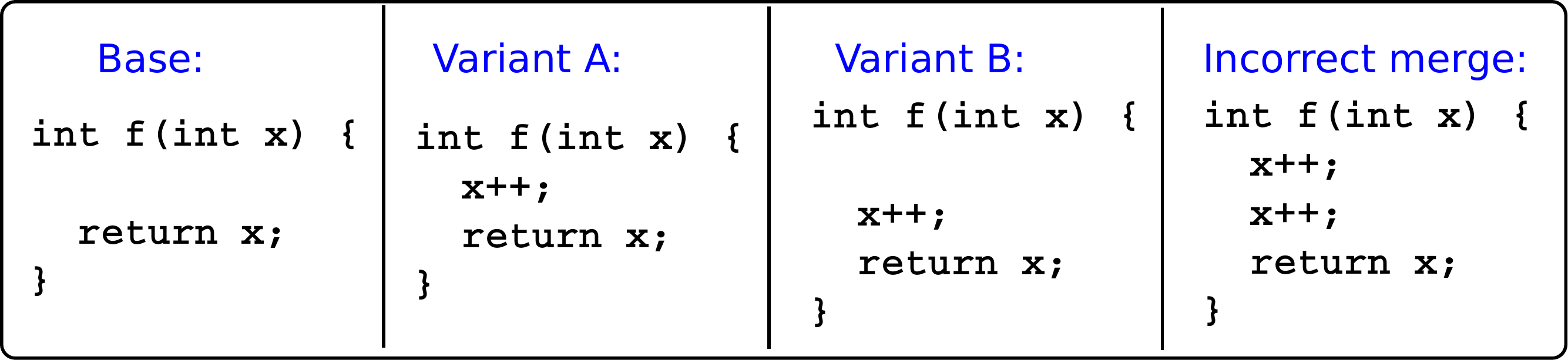}
\end{center}
\vspace{-0.15in}
\caption{Simple motivating example}\label{fig:ex-basic}
\vspace{-0.16in}
\end{figure}

\begin{figure*}[!t]
\vspace{-0.1in}
\begin{center}
\includegraphics[scale=0.3]{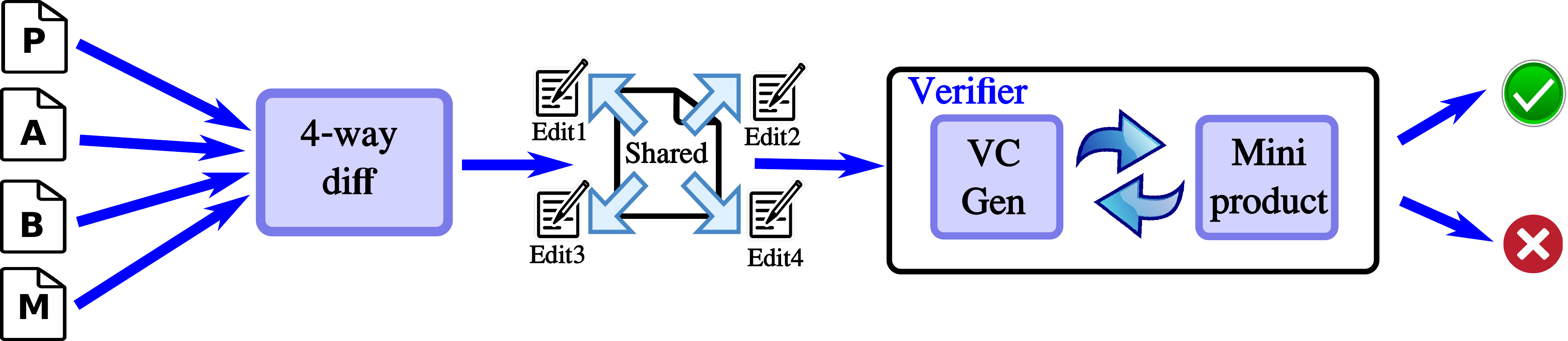}
\end{center}
\vspace{-0.35in}
\caption{High-level overview of our approach}\label{fig:workflow}
\vspace{-0.1in}
\end{figure*}

The main contribution of this paper is a novel {compositional} verification algorithm, and its implementation in a tool called \tool,  for automatically proving semantic conflict-freedom.  Our method is compositional in that it analyzes  different modifications to the program in isolation and composes them to obtain an overall proof of semantic conflict-freedom. A key idea that allows compositionality is to model different versions of the program using  \emph{edits} applied to a \emph{shared program with holes}. Specifically, the shared program captures common statements between the program versions, and  holes represent discrepancies between them. The edits describe how to fill each hole in the shared program to obtain the corresponding statement in a variant. Given such a representation that is automatically generated by \tool, our verification algorithm uses lightweight analysis to reason about shared program fragments but resorts  to precise relational techniques to reason about modifications.

The overall workflow of our approach is illustrated schematically in Figure~\ref{fig:workflow}. Our method takes as input four related programs, namely the original program $P$, two variants $A$ and $B$, and a merge candidate $M$, and represents them as edits applied to a shared program by running a ``4-way diff" algorithm on the abstract syntax trees. The verifier leverages the result of the 4-way diff algorithm to identify which parts of the program to analyze more precisely. Specifically, our verification algorithm summarizes shared program fragments using uninterpreted functions of the form $x = f(x_1, \ldots, x_n)$ that encode dependencies between program variables. In contrast, the verifier reasons about edited program fragments in a more fine-grained way by constructing 4-way  \emph{product programs} that encode the simultaneous behavior of all four edits. Overall, this interplay between lightweight dependence analysis and product construction allows our technique to generate verification conditions whose complexity depends on the size and number of the edits.

To evaluate our technique, we collect over 50 real-world merge scenarios obtained by crawling Github commit histories and evaluate \toolname\ on these benchmarks. Our tool is able to verify the correctness of the merge candidate in 75\% of the benchmarks and identifies eleven real violations of semantic conflict-freedom, some of which are not detected by textual merge tools.  Our evaluation also demonstrates the scalability of our method and illustrates the advantages of performing compositional reasoning.

In all, this paper makes the following key contributions:

\begin{itemize}[leftmargin=*]
\item We introduce the merge verification problem based on the notion of semantic conflict-freedom.
\item We provide a  compositional verification algorithm that
combines precise relational reasoning about the edits with lightweight reasoning for unedited program fragments. 
\item We present a novel $n$-way product construction technique for precise relational verification.
\item We describe an $n$-way AST diff algorithm and use it to represent program versions as edits applied to a shared program with holes.
\item We implement our method in a tool called \toolname \ and evaluate our approach on real-world merge scenarios collected from Github repositories. 
\end{itemize}

\section{Overview}
\label{sec:overview}

\renewcommand{\ttdefault}{pxtt}

\lstdefinelanguage{boogie}{
  keywords={%
    proc,free, function,returns,var,int,bool,call,return,assert,assume,goto,havoc,modifies,requires,ensures,while,if,const,axiom,foreach
  },
  morecomment=[l]{//},
  basicstyle=\ttfamily\footnotesize,
  escapechar=\%
}

\newcommand{\lstc}[1]{\lstset{language=C,basicstyle=\ttfamily\footnotesize}\lstinline #1}
\newcommand{\lstboogie}[1]{\lstset{language=boogie}\lstinline #1}

\definecolor{impacted}{rgb}{0.7,0.7,0.7}
\definecolor{syntchanged}{rgb}{1.,0.75,0.75}
\definecolor{linenumbergray}{rgb}{0.5,0.5,0.5}
\definecolor{gitdel}{RGB}{255,236,236}
\definecolor{gitdelfocused}{RGB}{248,203,203}
\definecolor{gitadd}{RGB}{234,255,234}
\definecolor{gitaddfocused}{RGB}{166,235,166}

\makeatletter
\newenvironment{btHighlight}[1][]
{\begingroup\tikzset{bt@Highlight@par/.style={#1}}\begin{lrbox}{\@tempboxa}}
{\end{lrbox}\bt@HL@box[bt@Highlight@par]{\@tempboxa}{impacted}\endgroup}

\newcommand\btHL[1][]{%
  \begin{btHighlight}[#1]\bgroup\aftergroup\bt@HL@endenv%
}
\def\bt@HL@endenv{%
  \end{btHighlight}%
  \egroup
}
\newcommand{\bt@HL@box}[3][]{%
  \tikz[#1]{%
    \pgfpathrectangle{\pgfpoint{1pt}{0pt}}{\pgfpoint{\wd #2}{\ht #2}}%
    \pgfusepath{use as bounding box}%
    \node[anchor=base west, fill=#3, outer sep=0pt,inner xsep=0pt, inner ysep=-0.5pt, #1]{\raisebox{1pt}{\strut}\strut\usebox{#2}};
  }%
}
\makeatother

\lstdefinestyle{Java}{basicstyle=\ttfamily\scriptsize,
        language=Java,
        numbers=left,
        numberstyle=\tiny\color{linenumbergray},
        moredelim=**[is][\btHL]{@i@}{@i@},
        moredelim=**[is][{\btHL[fill=gitdel]}]{@d@}{@d@},
        moredelim=**[is][{\btHL[fill=gitdelfocused, rounded corners=2pt]}]{@df@}{@df@},
        moredelim=**[is][{\btHL[fill=gitadd]}]{@a@}{@a@},
        moredelim=**[is][{\btHL[fill=gitaddfocused, rounded corners=2pt]}]{@af@}{@af@},
        escapeinside={(*@}{@*)}}

\newcommand{\letr}{\rm let}
\newcommand{\inr}{\rm in}
\newcommand{\skipr}{\rm skip}
\newcommand{\hedit}{\hat{\edit}}
\newcommand{\numHoles}[1]{\mathsf{numHoles}(#1)}
\newcommand{\tail}[1]{\mathsf{tail}(#1)}
\newcommand{\head}[1]{\mathsf{head}(#1)}
\newcommand{\Compose}{\mathsf{Compose}}
\newcommand{\GenEdit}{\mathsf{GenEdit}}
\newcommand{\DiffTwo}{\mathsf{Diff2}}

\newcommand{\sel}{\mathit select}
\newcommand{\update}{\mathit update}


In this section, we give an overview of our approach with the aid of a merge example from the \mytt{RxJava} project \footnote{\url{https://github.com/ReactiveX/RxJava/commit/1c47b0c}.}.
Figure~\ref{fig:running-example} shows the Base version ($\orig$) of the  \mytt{triggerActions} method from the \mytt{TestScheduler.java} file.
The two variants $\vara$, $\varb$ and the merge $\cand$ perform the following modifications: 
\begin{itemize}[leftmargin=*]
\item Variant $\vara$ {\it moves} the statement \mytt{time = targetTimeInNanos} at line~\ref{line:hole1} to immediately after the while loop.
This modification impacts the value of the variable \mytt{time} in $\vara$ with respect to the Base version. 
\item Variant $\varb$ guards the call \mytt{current.action.call(...)} at line~\ref{line:hole2} with a condition \mytt{if(!current.isCancelled.get()) \{...\}}.
The call (at line~\ref{line:hole2}) has a side effect on the variable called \mytt{value} (we omit the implementation of this procedure). This  modification changes the effect on \mytt{value} with respect to the Base version. 
\item The merge $\cand$ incorporates both of these changes.
\end{itemize}

\begin{figure}[t]
\begin{lstlisting}[style=Java]
int time;  int value;
void triggerActions(long targetTimeInNanos) {
    while(!queue.isEmpty()){
         TimedAction current = queue.peek();
         if(current.time > targetTimeInNanos){
             time = targetTimeInNanos; (*@\label{line:hole1}@*)
             break; 
         }
         time = current.time;
         queue.remove();
         current.action.call(current.scheduler, current.state);  (*@\label{line:hole2}@*) 
    }  } (*@\label{line:hole3}@*)
\end{lstlisting}
\vspace{-0.2in}
\caption{Procedure from  the base program in {\tt RxJava}. }
\label{fig:running-example}
\vspace{-0.2in}
\end{figure}

This example is interesting in that both variants modify code within a loop, and one of them (namely, $\varb$) changes the control-flow by introducing a conditional. 
The loop in turn depends on the state of an unbounded collection \mytt{queue}, which is manipulated using methods such as \mytt{queue.isEmpty} and \mytt{queue.remove}.
Furthermore, while \mytt{triggerActions} has no return value, it has implicit side-effects on variables \mytt{time} and \mytt{value}, and on the collection \mytt{queue}. 
Together, these features make it challenging to ensure that the merge $\cand$ preserves  changes from both variants and does not introduce any new behavior. 


To verify semantic conflict-freedom,  our techinque represents the changes formally using a list of {\it edits} over a shared program with {\it holes}.
Figure~\ref{fig:edits-running-example} shows the shared program $\hstmt$ along with the corresponding edits $\edit_\orig, \edit_\vara, \edit_\varb, \edit_\cand$.
A hole (denoted as \mytt{<?HOLE?>}) in $\hstmt$ is a placeholder for a statement.
The shared program captures the statements that are common to all the four versions ($\orig$, $\vara$, $\varb$ and $\cand$), and the holes in $\hstmt$ represent program fragments that differ between the program versions.
An edit $\edit_{\mathcal{P}}$ for program version $\mathcal{P}$ represents a list of statements that will be substituted into the holes of the shared program to obtain $\mathcal{P}$. 

\begin{figure}[h!]
\underline{Shared program with holes ($\hstmt$)}
\begin{lstlisting}[style=Java,numbers=none]
void triggerActions(long targetTimeInNanos) {
    while (!queue.isEmpty()) {
        TimedAction current = queue.peek();
        if (current.time > targetTimeInNanos) {
           <?HOLE?>;
           break; }
        time = current.time; queue.remove();
        <?HOLE?>; 
    }
    <?HOLE?>;
}
\end{lstlisting}
\vspace{-0.2in}
\underline{Edit $\orig$ ($\edit_\orig$)}
\begin{lstlisting}[style=Java,numbers=none,xleftmargin=0.0cm]
[ time = targetTimeInNanos, current.action.call(...), skip ]
\end{lstlisting}

\underline{Edit $\vara$ ($\edit_\vara$)}
\begin{lstlisting}[style=Java,numbers=none,xleftmargin=0.0cm]
[ skip, current.action.call(...), time = targetTimeInNanos ]
\end{lstlisting}
\underline{Edit $\varb$ ($\edit_\varb$)}
\begin{lstlisting}[style=Java,numbers=none,xleftmargin=0.0cm]
[ time = targetTimeInNanos, 
  if(!current.isCancelled.get()) { current.action.call(...);}, 
  skip ]
\end{lstlisting}
\vspace{-0.1in}
\underline{Edit $\cand$ ($\edit_\cand$)}
\vspace{-0.1in}
\begin{lstlisting}[style=Java,numbers=none,xleftmargin=0.0cm]
[ skip,
  if(!current.isCancelled.get()) { current.action.call(...); },
  time = targetTimeInNanos ]
\end{lstlisting}
\vspace{-1em}
\caption{Shared program with holes and the edits.}
\vspace{-0.1in}
\label{fig:edits-running-example}
\end{figure}

Given this representation, we express {\it semantic conflict-freedom} as an assertion for each of the return variables (in this case, global variables modified by the  \mytt{triggerActions} method). 
Since the \mytt{triggerActions} method modifies \mytt{time}, \mytt{value} and \mytt{queue}, we add an assertion for each of these variables. For instance,
 we add the following assertion on the value of \mytt{time} at exit from the four versions:
\[
\small
\begin{array}{c}
(\mytt{time}_\orig = \mytt{time}_\varb  = \mytt{time}_\vara = \mytt{time}_\cand ) \bigvee  \\
\big ((\mytt{time}_\orig \neq \mytt{time}_\vara  \Rightarrow \mytt{time}_\vara = \mytt{time}_\cand ) \bigwedge \\ 
(\mytt{time}_\orig \neq \mytt{time}_\varb  \Rightarrow \mytt{time}_\varb = \mytt{time}_\cand)  \big )
\end{array}
\]
This assertion states that either (i) all  four versions have identical side-effects on \mytt{time}, or (ii) if the side-effect on $\mytt{time}_\vara$ (resp. $\mytt{time}_\varb$) differs from $\mytt{time}_\orig$, then $\mytt{time}_\cand$ in the merge should have identical side-effect as $\mytt{time}_\vara$ (resp. $\mytt{time}_\varb$). 
We add similar assertions for \mytt{value} and \mytt{queue}. 

To prove these assertions, our method assumes that all four versions start out in identical states  and then generates a \emph{relational postcondition (RPC)} $\psi$ such that the merge is semantically conflict-free if $\psi$ logically implies the added assertions. Our RPC generation engine reasons about modifications over the base program by differentiating between three kinds of statements:

\vspace{-0.1in}
\paragraph{Shared statements.} We summarize the behavior of  shared statements using  straight-line code snippets of the form $y = f(x_1, \ldots, x_n)$ where  $f$ is an uninterpreted function. Essentially, such a statement indicates that the value of variable $y$ is some (unknown) function of variables $x_1, \ldots, x_n$. These ``summaries" are generated using lightweight dependence analysis and allow our method to perform {\it abstract reasoning} over unchanged program fragments. 

\vspace{-0.1in}
\paragraph{Holes.} When our RPC generation engine encounters a hole in the shared program, it performs precise relational reasoning about different modifications by computing a \emph{4-way product program} of the edits. As is well-known in the relational verification literature~\cite{product1,product2}, a product program $P_1 \times P_2$ is semantically equivalent to $P_1; P_2$ but is constructed in a way that facilitates the verification task. However, because product construction can result in a significant blow-up in program size, our technique generates \emph{mini-products} by considering each hole in isolation rather than constructing a full-fledged product of the four program versions.

\vspace{-0.1in}
\paragraph{Loops.} Our RPC generation engine  infers \emph{relational loop invariants} for loops that contain edited program fragments. 
For instance, our method infers that (i) ${\tt time}_\orig = {\tt time}_\varb$ and ${\tt time}_\vara = {\tt time}_\cand$,  (ii) ${\tt value}_\orig = {\tt value}_\vara$ and ${\tt value}_\varb = {\tt value}_\cand$, and (iii) the state of collection ${\tt queue}$ is identical in all four versions for the shared loop from Figure~\ref{fig:edits-running-example}.

Using these ideas, our method is able to automatically generate an RPC that implies semantic conflict-freedom of this example.
Furthermore, the entire procedure is push-button, including the generation of edits, RPC computation, and relational loop invariant generation.
\section{Representation of Program Versions}
\label{sec:problem}

\begin{figure}
\[
\begin{array}{llll}
\mathrm{Program \ version} & \pv & := & (\hstmt, \edit) \\
\mathrm{Edit} & \edit & := & [ \ ] \ | \ \stmt::\edit \\
\mathrm{Stmt \ with \ hole} &  \hstmt & := & \hole \ |  \ A \ | \ \hstmt_1; \hstmt_2 \ |  \ \ifstmt{C}{\hstmt_1}{\hstmt_2} \\
& & | \ & \while{C}{\hstmt} \\
\mathrm{Stmt} &  \stmt & := & A \ | \ \stmt_1; \stmt_2 \ | \ \ifstmt{C}{\stmt_1}{\stmt_2} \\ 
& & | \ & \while{C}{\stmt}  \\
{\rm Atom} & A & := & {\rm skip} \ | \ x:=e \ | \ x[e_1] := e_2  \\
\end{array}
\]
\vspace{-0.2in}
\caption{Representation of program versions. Here,  $::$ denotes list concatanation, and $e$ and $C$ represent expressions and predicates respectively. }\label{fig:lang}
\vspace{-0.15in}
\end{figure}


In this section, we describe our representation of program versions as \emph{edits} applied to a \emph{shared program with holes}. As shown in Figure~\ref{fig:lang}, a program version $\pv$ is  a pair $(\hstmt, \edit)$ where $\hstmt$ is a statement with \emph{holes} (i.e., missing statements) and an edit $\edit$ is a list of statements (without holes). Given a program version $\pv = (\hstmt, \edit)$, we can obtain a full program $\prog = \hstmt[\edit]$  by applying the edit $\edit$ to $\hstmt$ according to the $\applyEdit$ procedure of Figure~\ref{alg:apply}. Effectively, $\applyEdit$  traverses the AST in depth-first order and replaces each hole with the next statement in the edit. Given $n$ related programs $\prog_1, \ldots, \prog_n $, we assume the existence of a \emph{diff} procedure that generates a shared program $\hstmt$ as well as $n$ edits $\edit_1, \ldots, \edit_n$ such that $\forall i \in [1,n]. \ \applyEdit(\hstmt, \edit_i) = \prog_i$. Since this \emph{diff} procedure is orthogonal to our verification algorithm, we  defer the discussion of our \emph{diff} procedure until Section~\ref{sec:edit-gen}.


\begin{figure}
   \small
\[
\begin{array}{l}
\ \ \mathsf{ApplyEdit}(\hstmt, \edit) = \stmt \ \ {\rm where} (\stmt, \nil) = \mathsf{Apply}(\hstmt, \edit) \vspace{0.05in} \\ 
\ \ \apply:: (\hstmt, \edit) \rightarrow (\stmt, \Delta') \vspace{0.06in} \\
\begin{array}{lll}
\apply  (\hole, \stmt::\edit) & = &   (\stmt, \edit) \\
\apply  (A, \edit) & = & (A, \edit) \\
 \apply (\hstmt_1; \hstmt_2, \edit) & = &   {\rm let} \ (\stmt_1, \edit_1) \ = \ \apply(\hstmt_1, \edit) \ {\rm in}\\
 & &
 {\rm let} \ (\stmt_2, \edit_2) \ = \ \apply (\hstmt_2, \edit_1) \ {\rm in} \\ 
 & & ((\stmt_1; \stmt_2), \edit_2) \\
 \apply  (\ifstmt{C}{\hstmt_1}{\hstmt_2}, \edit) & = &   {\rm let} \ (\stmt_1, \edit_1) \ = \ \apply(\hstmt_1, \edit) \ {\rm in}  \\
  & &  {\rm let} \ (\stmt_2, \edit_2) \ = \ \apply(\hstmt_2, \edit_1) \ {\rm in} \\ 
 & & (\ifstmt{C}{\stmt_1}{\stmt_2}, \edit_2) \\
 \apply (\while{C}{\hstmt}, \edit) & = &   {\rm let} \ (\stmt, \edit') \ = \ \apply(\hstmt, \edit) \ {\rm in}   \\ 
 & & (\while{C}{\stmt}, \edit')
 \end{array}
\end{array}
\]
\vspace{-0.17in}
\caption{Application of edit $\Delta$ to program with holes $\hstmt$}\label{alg:apply} 
\vspace{-0.13in}
\end{figure}

Since the language from Figure~\ref{fig:lang} uses standard imperative language constructs (including arrays), we assume an operational semantics  described using judgments of the form $\sigma \vdash \stmt \Downarrow \sigma'$, where $\sigma$ is a \emph{valuation} that specifies the values of free variables in $\stmt$.
Specifically, a valuation is a mapping from (variable, index) pairs to their corresponding values.  The meaning of this judgment is that evaluating $\stmt$ under $\sigma$ yields a new valuation $\sigma'$. 
In the rest of this paper, we also assume the existence of a special array called \emph{out} that serves as the return value of the program. Any behavior  that the programmer considers relevant (e.g., side effects or writing to the console) can be captured by storing the relevant values into this \emph{out} array.


\remove{
\begin{figure}
\[
\begin{array}{ccc}
\irule{
}{\sigma \vdash {\rm skip} \Downarrow \sigma } & 
\irule{
\begin{array}{cc}
\sigma \vdash e \Downarrow c & \sigma' = \sigma[(x ,0) \mapsto c]
\end{array}
}{\sigma \vdash x := e\Downarrow \sigma' } 
&
\irule{
\begin{array}{c}
\sigma[(x,0)] = c
\end{array}
}{\sigma \vdash out(x)\Downarrow \sigma } \\ \ \\ 

\irule{
\begin{array}{c}
\sigma \vdash e_1 \Downarrow c_1  \ \ \sigma \vdash e_2 \Downarrow c_2 \\
 \sigma' = \sigma[(x ,c_1) \mapsto c_2]
\end{array}
}{\sigma \vdash x[e_1] := e_2\Downarrow \sigma'} &

\irule{
\begin{array}{c}
\sigma \vdash S_1 \Downarrow \sigma_1 \\
\sigma_1 \vdash S_2 \Downarrow \sigma_2
\end{array}
}{\sigma \vdash S_1; S_2 \Downarrow \sigma_2 }  &
\irule{
\begin{array}{c}
\sigma \vdash C \Downarrow {\rm true} \\
\sigma \vdash S_1 \Downarrow \sigma_1 \\
\end{array}
}{\sigma \vdash \ifstmt{C}{S_1}{S_2} \Downarrow \sigma_1 } \\ \ \\ 
\irule{
\begin{array}{c}
\sigma \vdash C \Downarrow {\rm false} \\
\sigma \vdash S_2 \Downarrow \sigma_2 \\
\end{array}
}{\sigma \vdash \ifstmt{C}{S_1}{S_2} \Downarrow \sigma_2 } &
\irule{
\begin{array}{c}
\sigma \vdash C \Downarrow {\rm false} 
\end{array}
}{\sigma \vdash \while{C}{S} \Downarrow \sigma } & 
\irule{
\begin{array}{c}
\sigma \vdash C \Downarrow {\rm true} \\
\sigma \vdash S \Downarrow \sigma_1 \\
\sigma_1 \vdash  \while{C}{S} \Downarrow \sigma_2
\end{array}
}{\sigma \vdash \while{C}{S} \Downarrow \sigma_2 } 
\end{array}
\]
\caption{Operational semantics. Here, $\sigma$ maps (variable, index) pairs to values, and we view scalar variables as arrays with a single valid index at 0. Since the semantics of expressions is completely standard, we do not show them here.  However, we assume that reads from locations that have not been initialized yield a special constant $\bot$.}\label{fig:semantics}
\end{figure}
}

\section{Semantic Conflict Freedom} \label{sec:conflict}

In this section, we first introduce  \emph{syntactic} conflict-freedom, which corresponds to the criterion used by many existing merge tools.  We then explain why it falls short and  formally describe the more robust notion of \emph{semantic} conflict-freedom.

\begin{definition}{(\bf Syntactic conflict freedom)}\label{def:syntactic}
Suppose that we are given four program versions $\orig = (\hstmt, \edit_\orig)$, $\vara=(\hstmt, \edit_\vara)$, $\varb=(\hstmt, \edit_\varb)$, $\cand=(\hstmt, \edit_\cand)$ representing the base program, the two variants, and the merge candidate respectively. We say that the merge candidate $\cand$ is \emph{syntactically conflict free} if the following conditions are satisfied for all $i \in [0,n)$, where $n$ denotes the number of holes in $\hstmt$:
\begin{enumerate}
\item If $\edit_\orig[i] \neq \edit_\vara[i]$, then $\edit_\cand[i] = \edit_\vara[i]$
\item If $\edit_\orig[i] \neq \edit_\varb[i]$, then $\edit_\cand[i] = \edit_\varb[i]$
\item Otherwise,   $\edit_\orig[i] = \edit_\vara[i] = \edit_\varb[i] = \edit_\cand[i]$
\end{enumerate}
\end{definition}

Intuitively, the above definition  states that the candidate merge $\cand$ makes the same syntactic change as variant $\vara$ (resp. $\varb$)  whenever $\vara$ (resp. $\varb$) differs from $\orig$. While this definition may seem intuitively sensible, it does not accurately capture what it means for a merge candidate to be correct. In particular, some incorrect merges may be conflict-free according to the above definition, while some correct merges may be rejected.

\begin{example}\label{ex:fn}
Consider  $\hstmt = \hole; \hole; out[0] := x$ and the edits $\edit_\orig = [\skipp, \skipp]$, $\edit_\vara = [x:=x+1, \skipp]$, $\edit_\varb = [\skipp, x:=x+1]$, and $\cand = [x:=x+1; x:=x+1]$.  Observe that applying these edits to $\hstmt$ yields the same programs given in Figure~\ref{fig:ex-basic}. These programs are conflict-free according to the syntactic criterion given in Definition~\ref{def:syntactic}, but the merge is clearly incorrect (both variants increment $x$ by $1$, but the merge candidate ends up incrementing $x$ by $2$). 
\end{example}

The above example illustrates that a syntactic notion of conflict freedom is not suitable for ruling out incorrect merges. Similarly, Definition~\ref{def:syntactic} can also result in the  rejection of perfectly valid merge candidates.

\begin{example}\label{ex:fp}
Consider the base program $\ifstmt{x>0}{y:=1}{y:=0}; \ out[0] := y$. Suppose this program has a bug that is caused by using the wrong predicate, so one variant fixes the bug by swapping the then and else branches, and the other variant changes the predicate from $x>0$ to $x \leq 0$. Clearly, choosing either variant as the merge  would be acceptable because they are semantically equivalent. However, there is no merge candidate that can satisfy Definition~\ref{def:syntactic} because the shared program is $\hole; out[0]:=y$ and the two variants fill the hole in syntactically conflicting ways.
\end{example}

Based on the shortcomings of syntactic conflict freedom, we instead propose the following \emph{semantic} variant:

\begin{definition}{(\bf Semantic conflict freedom)}\label{def:semantic}
Suppose that we are given four program versions $\orig, \vara, \varb, \cand $ representing the base program, its two variants, and the merge candidate respectively. We say that $\cand$ is \emph{semantically conflict-free}, if for all valuations $\sigma$ such that:
\[
\begin{array}{llll}
\sigma \vdash \orig \Downarrow \sigma_\orig &
\sigma \vdash \vara \Downarrow \sigma_\vara &
\sigma \vdash \varb \Downarrow \sigma_\varb &
\sigma \vdash \cand \Downarrow \sigma_\cand
\end{array} 
\]
the following conditions hold for all $i$:~\footnote{We assume that $out[i]$ is a special value $\bot$ if $(out, i) \not \in \emph{dom}(\sigma)$}
\begin{enumerate}[leftmargin=*]
\small
\item If $\sigma_\orig[(out, i)] \neq \sigma_\vara[(out, i)]$, then $\sigma_\cand[(out, i)] = \sigma_\vara[(out, i)]$
\item If $\sigma_\orig[(out, i)] \neq \sigma_\varb[(out, i)]$, then $\sigma_\cand[(out, i)] = \sigma_\varb[(out, i)]$
\item Otherwise,   $\sigma_\orig[(out, i)] = \sigma_\vara[(out, i)] = \sigma_\varb[(out, i)] = \sigma_\cand[(out, i)]$
\end{enumerate}
\end{definition}

In contrast to  syntactic conflict freedom, Definition~\ref{def:semantic} requires agreement between the \emph{values} that are returned by the program. Specifically, it says that, if the $i$'th value returned by variant $A$ (resp. $B$) differs from the $i$'th value returned by base, then the $i$'th return value of the merge  should agree with $A$ (resp. $B$). According to this definition, the merge candidate from Example~\ref{ex:fn} is \emph{not} conflict-free because it returns $2$ whereas both variants return  $1$. Furthermore, for Example~\ref{ex:fp}, we can find a merge candidate (e.g., one of the variants) that satisfies  semantic conflict freedom. 



\newcommand{\spost}{post}
\newcommand{\cond}{\varphi}
\newcommand{\scripts}{\vec{\edit}}
\newcommand{\inv}{\mathcal{I}}
\newcommand{\crp}{\circledast}
\newcommand{\gen}{\rightsquigarrow}
\newcommand{\lp}[2]{\while{#1}{#2}}
\newcommand{\preds}{\mathsf{Preds}}

\section{Verifying Semantic Conflict Freedom}
\label{sec:verify}

\begin{algorithm}[t]
\caption{Algorithm for verifying  conflict freedom}\label{alg:verify}
\begin{algorithmic}[1]  
\vspace{0.05in}
\Procedure{Verify}{$\hstmt, \edit_1, \edit_2, \edit_3, \edit_4$} 
\vspace{0.05in}
\State {\bf assume} $\emph{vars}(\set{\hstmt[\edit_1],\ldots, \hstmt[\edit_4]}) = V$ 
\State $\cond$ := $ (V_1 = V_2 \land V_1 = V_3 \land V_1 = V_4)$ 
\vspace{0.02in}
\State $\psi$ := $\mathsf{RelationalPost}(\hstmt, \edit_1, \edit_2, \edit_3, \edit_4, \cond)$
\vspace{0.02in}
\State $\chi_1$ := $\forall i. \ (out_1[i] \neq out_2[i] \Rightarrow out_2[i] = out_4[i]) $ 
\State $\chi_2$ := $\forall i.  \ (out_1[i] \neq out_3[i] \Rightarrow out_3[i] = out_4[i]) $
\State $\chi_2$ := $\forall i.  \ (out_1[i] = out_2[i] = out_3[i] = out_4[i])$
\vspace{0.02in}
 \State \Return {$\psi \models (\chi_1 \land \chi_2) \lor \chi_3$} 
\EndProcedure
\end{algorithmic}
\end{algorithm}

We now turn our attention to the verification algorithm for proving  semantic conflict-freedom. The high-level structure of the verification algorithm is quite simple and is shown in Algorithm~\ref{alg:verify}. It takes as input a shared program (with holes) $\hstmt$, an edit  $\edit_1$ for the base program, edits $\edit_2, \edit_3$ for the variants, and an edit  $\edit_4$ for the merge candidate. Conceptually, the algorithm consists of three steps:

\vspace{-0.05in}
\paragraph{Precondition.} Algorithm~\ref{alg:verify} starts by generating a pre-condition $\cond$ (line 3) stating  that {all} variables initially have the same value. ~\footnote{Observe that this precondition also applies to  local variables, not just arguments, and allows our technique to handle cases in which one of the variants introduces a new variable. } Note that $V_1$ denotes the variables in the base program, $V_2, V_3$ denote variables in the variants, and $V_4$ refers to variables in the merge candidate. We use the notation $V_i = V_j$ as short-hand for $\forall v \in V. \ v_i = v_j$. 

\vspace{-0.05in}
\paragraph{RPC computation.} The next step of the algorithm is to compute a \emph{relational post-condition} $\psi$ of $\cond$ with respect to the four program versions (line 4). Such a relational post-condition $\psi$ states relationships between variables $V_1, V_2, V_3$, and $V_4$ and has the property that it is also post-condition of the program $ (\hstmt[\edit_1])[V_1/V]; \ldots; (\hstmt[\edit_4])[V_4/V]$. We will explain the $\mathsf{RelationalPost}$ procedure in detail shortly.

\vspace{-0.05in}
\paragraph{Checking conflict freedom.}  The last step of the algorithm checks whether the relational post-condition $\psi$ logically implies semantic conflict freedom (line 8). Specifically,  observe that the constraint $(\chi_1 \land \chi_2) \lor \chi_3$ encodes precisely the three conditions from Definition~\ref{def:semantic}, so the program is conflict-free if $\psi$  implies $(\chi_1 \land \chi_2) \lor \chi_3$.

\subsection{Computing Relational Postconditions}\label{sec:relpost}

Since the core part of the verification algorithm is the computation of RPCs, we now describe the $\mathsf{RelationalPost}$ procedure. As mentioned in Section~\ref{sec:intro}, the key idea is to analyze edits in a precise way by constructing   product programs, but perform lightweight reasoning for shared program parts using dependence analysis.

Our RPC generation engine is described in Figure~\ref{fig:relpost} using judgments  $
\vec{\edit}, \cond \vdash \hstmt: \cond', \vec{\edit'}
$.
Here, $\cond$ is a precondition relating variables in different program versions, and $\vec{\edit}$ is a vector of $n$ edits applied to a shared base program $\hstmt$. The meaning of this judgment is that the following Hoare triple is valid:
\[ \{\cond \} \ \hstmt[\edit_1][V_1/V]; \ldots ;\hstmt[\edit_n][V_n/V] \ \{\cond'\}\] 
In other words, $\varphi'$ is a sound relational post-condition of the four program versions with respect to precondition $\varphi$. Since the edits in $\vec{\edit}$ may contain more statements than there are holes in $\hstmt$,  we use $\vec{\edit'}$ to denote the remaining edits  that were not ``used'' while analyzing $\hstmt$.

\begin{figure}
\small
\[
\begin{array}{lc}
(1) \ \ & \irule{
\begin{array}{c}
 \stmt = \textsf{head}(\edit_1)[V_1/V] \crp \ldots \crp \textsf{head}(\edit_4)[V_4/V] 
\end{array}
}{\scripts, \cond \vdash  \hole: \spost(\stmt, \cond),  [\textsf{tail}(\edit_1), \ldots, \textsf{tail}(\edit_4) ] } \\ \ \\ 
(2) \ \ & \irule{
\begin{array}{c}
\textsf{Modifies}(\stmt) = \{ {y_1}, \ldots, {y_n} \} \\
\vec{{x}_i} = \textsf{Dependencies}(\stmt, {y}_i) \\
{\stmt}_i = ({y}_i := F_i(\vec{{x}_i}))[V_1/V]; \ldots; ({y}_i := F_i(\vec{{x}_i}))[V_4/V] 
\end{array}
}{\scripts, \cond \vdash   \stmt: \spost({\stmt}_1; \ldots; {\stmt}_n, \cond), \scripts}  \\ \ \\ 
(3) \ \ \  &
\irule{
\begin{array}{ll}
\scripts, \cond   \vdash   \hstmt_1: \cond', \scripts'  &
\scripts', \cond'    \vdash   \hstmt_2: \cond'', \scripts''
\end{array}
}{
\scripts, \cond \vdash  \ \hstmt_1; \hstmt_2:  \cond'', \scripts''
} \\ \ \\ 
(4) \ \ \  &
\irule{
\begin{array}{c}
\cond \models \bigwedge_{i,j} C[V_i/V] \leftrightarrow  C[V_j/V]  \\
\scripts, \cond \land C[V_1/V]  \vdash   \hstmt_1: \cond', \scripts'  \\
\scripts', \cond \land \neg C[V_1/V]  \vdash   \hstmt_2: \cond'', \scripts'' \\

\end{array}
}{
\scripts, \cond \vdash  \ifstmt{C}{\hstmt_1}{\hstmt_2}: \cond' \lor \cond'', \scripts''
} \\ \ \\ 
 (5) \ \ & \irule{
\begin{array}{c}
\cond \models \inv \ \ \ \ \ \scripts,\inv \land \bigwedge_i  C[V_i/V] \vdash \hstmt: \inv', \scripts' \ \ \ \ \ 
\inv' \models \inv \\
\inv \models \bigwedge_{i,j} C[V_i/V] \leftrightarrow  C[V_j/V] 
\end{array}
}
{\scripts, \cond \vdash {\rm while}(C) \  \ \hstmt: \inv \land \bigwedge_i \neg C[V_i/V], \scripts' } \\ \ \\
(6) \ \ & \irule{
\begin{array}{c}
\stmt = (\hstmt[\edit_1])[V_1/V] \crp \ldots \crp (\hstmt[\edit_4])[V_4/V] \\
\edit_i = (\edit_{i}^1::\edit_{i}^2) \ \ (|\edit_{i}^1| = \mathsf{numHoles}(\hstmt))
\end{array}
}
{\scripts, \cond \vdash \hstmt: post(\stmt, \varphi), [\edit_1^2, \ldots, \edit_4^2]}
\end{array}
\]
\vspace{-0.15in}
\caption{RPC inference}
\vspace{-0.1in}
\label{fig:relpost}
\end{figure}

Let us now consider the rules in Figure~\ref{fig:relpost} in more detail. The first rule corresponds to the case where we encounter a hole in the shared program and need to analyze the edits. In this case, we construct a ``mini'' product program $\stmt$ that describes the simultaneous execution of the edits. As we will see in Section~\ref{sec:product},  an $n$-way product program $\stmt_1 \crp \ldots \crp \stmt_n$ is semantically equivalent to the sequential composition $\stmt_1; \ldots; \stmt_n$ but has the advantage of being easier to analyze. Given such a ``mini product'' $\stmt$, our RPC generation engine computes the post-condition of $\stmt$ in the standard way using a \emph{post} function, where $post(\stmt, \cond)$ yields a sound post-condition of $\cond$ with respect to $\stmt$.  Since $\stmt$ may contain loops in the general case,  the computation of \emph{post} may require loop invariant generation. As we discuss in Section~\ref{sec:product}, the key advantage of constructing a product program is to facilitate loop invariant generation using standard techniques. 

Rule (2) corresponds to the case where we encounter a program fragment $\stmt$ without  holes. Since $\stmt$ has not been modified by any of the variants, we analyze $\stmt$ in a lightweight way using dependence analysis. Specifically, for each variable $y_i$ that is modified by $\stmt$, we compute the set of variables $x_1, \ldots, x_k$ that it depends on. We then ``summarize'' the behavior of $\stmt$ using statements of the form $y_i = F_i(x_1, \ldots, x_k)$ where $F_i$ is a fresh uninterpreted function symbol. Hence, rather than analyzing the entire code fragment $\stmt$ (which could potentially be very large), we analyze its behavior in a lightweight way by modeling it as straight-line code over uninterpreted functions.~\footnote{There are rare cases in which this abstraction would lead to imprecision.  Section~\ref{sec:impl} describes how our implementation handles such cases.}

Rule (3) for sequencing is  similar to its corresponding proof rule in standard Hoare logic: Given a statement $\hstmt_1; \hstmt_2$, we first compute the relational post-condition $\cond'$ of $\hstmt_1$ and then use $\cond'$ as the precondition for $\hstmt_2$. Since $\hstmt_1$ and $\hstmt_2$ may contain edits nested inside them, this proof rule combines reasoning about $\hstmt_1$ and $\hstmt_2$ in a precise, yet lightweight way, without constructing a 4-way product for the entire program. 

Rule (4)  allows us to analyze conditionals $\ifstmt{C}{\hstmt_1}{\hstmt_2}$ in a modular way whenever possible. As in the sequencing case, we would like to analyze $\hstmt_1$ and $\hstmt_2$ in isolation and then combine the results. Unfortunately, such compositional reasoning is only possible if all program versions take the same  path.
For instance, consider the shared program $\hole; \ifstmt{x>0}{y:=1}{y:=2}$ and two versions $A, B$ given by the edits $[x:=y]$ and $[x:=z]$. 
Since $A$ could take the then branch while $B$ takes the else branch (or vice versa), we need to reason about all possible combinations of paths. 
Hence, the first premise of this rule  checks whether each $C[V_i/V]$ can be proven to be equivalent to all other $C[V_j/V]$'s under precondition $\cond$. If this is the case, all program versions  take the same path, so we can reason compositionally.  Otherwise,  our analysis falls back upon the conservative, but non-modular, proof rule (6) that we will explain shortly.

Rule (5) uses \emph{inductive  relational  invariants} for loops that have been edited in different ways by each program variant. Specifically, the first premise of this rule states that the relational invariant $\inv$ is implied by the loop pre-condition, and the next two premises enforce that $\inv$ is preserved by the loop body (i.e., $\inv$ is inductive). Thus, assuming that all loops execute the same  number of times (checked by line 2 of rule 5), we can conclude that $\inv \land \bigwedge_i \neg C[V_i/V]$ holds after the loop.
Note that rule (5) does not describe how to compute such relational loop invariants; it simply asserts that $\inv$ is inductive. As we describe in Section~\ref{sec:impl}, our implementation uses standard techniques based on conjunctive predicate abstraction to infer such relational loop invariants. 

Rule (6) allows us to fall back upon non-modular reasoning when it is not sound to analyze edits in a compositional way. Given a statement $\hstmt$ with holes, rule (6) constructs the product program $(\hstmt[\edit_1])[V_1/V] \crp \ldots \crp (\hstmt[\edit_4])[V_4/V] $ and computes its post-condition in the standard way. While rule (6) is  a generalization of rule (1),  it is only used in cases where compositional reasoning is unsound, as 
 product construction can cause a blow up in program size.


\begin{theorem}{\bf (Soundness of relational post-condition) }~\footnote{
\ifsupp
Proofs of all theorems are available in the Appendix.
\else {Proofs of all theorems are available under supplementary materials.}
\fi} Let $\hstmt$ be a shared program with holes and $\vec{\edit}$ be the edits such that $|\edit_i|  = \mathsf{numHoles}(\hstmt)$. Let $\cond'$ be the result of  calling $\mathsf{RelationalPost}(\hstmt, \vec{\edit}, \cond)$ (i.e., $\vec{\edit}, \cond \vdash \hstmt: \cond', [ ]$ according to Figure~\ref{fig:relpost}). Then, the following Hoare triple is valid:
\[ \{ \cond \} \  (\hstmt[\edit_1])[V_1/V]; \ldots; (\hstmt[\edit_n])[V_n/V] \ \{ \cond' \} \]
\end{theorem}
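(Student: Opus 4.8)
The statement is a soundness claim about the derivation system in Figure~\ref{fig:relpost}, so the natural approach is induction on the derivation $\vec{\edit}, \cond \vdash \hstmt : \cond', \vec{\edit'}$. To make the induction go through I would first strengthen the statement so that it quantifies over arbitrary residual edits, not just the case where the edits are exactly consumed: the inductive claim should be that whenever $\vec{\edit}, \cond \vdash \hstmt : \cond', \vec{\edit'}$ is derivable, with $\edit_i = \edit_i^1 :: \edit_i'$ and $|\edit_i^1| = \mathsf{numHoles}(\hstmt)$, then $\{\cond\}\ (\hstmt[\edit_1^1])[V_1/V]; \ldots; (\hstmt[\edit_n^1])[V_n/V]\ \{\cond'\}$ is a valid Hoare triple and $\vec{\edit'} = [\edit_1', \ldots, \edit_n']$. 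The original theorem is then the special case $\vec{\edit'} = [\,]$. This generalization is what lets the sequencing rule (3) be handled: there $\hstmt_1$ consumes a prefix of the edits and $\hstmt_2$ consumes the rest.

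\textbf{Key steps.} The proof proceeds case-by-case on the last rule used. For rule (1) (a hole), I rely on the correctness of the $n$-way product construction $\crp$ promised in Section~\ref{sec:product}: the product $\stmt = \mathsf{head}(\edit_1)[V_1/V] \crp \cdots \crp \mathsf{head}(\edit_4)[V_4/V]$ is semantically equivalent to the sequential composition of the renamed heads, so soundness of $post$ gives the triple; the residual edits are the tails, matching the generalized claim. Rule (2) (shared fragment) is the one place where an abstraction step happens: I need that the summary $\stmt_i = (y_i := F_i(\vec{x_i}))[V_1/V]; \ldots$ over-approximates $\stmt$'s input-output relation, which follows because $\mathsf{Dependencies}(\stmt, y_i)$ correctly captures every variable whose value can influence the final value of $y_i$ (a standard soundness property of dependence analysis), so any concrete post-state of $\stmt$ from a $\cond$-state is among those permitted by the uninterpreted-function summary; then soundness of $post$ finishes it, and no edits are consumed. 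Rules (3), (4), (5) are the compositional Hoare-logic rules: (3) is straightforward composition using the generalized IH twice; (4) uses the premise $\cond \models \bigwedge_{i,j} C[V_i/V] \leftrightarrow C[V_j/V]$ to argue all $n$ executions branch the same way, so the concrete run matches exactly one of the two IH-covered cases and the disjunction $\cond' \lor \cond''$ is sound; (5) is the standard partial-correctness while rule — $\inv$ is an inductive invariant (premises 1 and 3), the guard-equivalence premise ensures all copies exit simultaneously, hence after the loop $\inv \land \bigwedge_i \neg C[V_i/V]$ holds (termination is not needed for partial correctness). Rule (6) is the non-modular fallback: build the full product $(\hstmt[\edit_1^1])[V_1/V] \crp \cdots$, invoke product-program soundness and $post$-soundness directly, and split the edits into the $\mathsf{numHoles}(\hstmt)$-prefix and the remainder.

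\textbf{Main obstacle.} The routine cases are genuinely routine once the statement is generalized over residual edits; the real content is concentrated in two places. First, rule (2): one must pin down precisely what soundness of $\mathsf{Dependencies}$ means and check that modeling a possibly-branching, possibly-looping shared fragment by straight-line assignments of uninterpreted functions really is sound — the subtlety (flagged in the paper's own footnote about "rare cases of imprecision") is that dependence analysis must account for \emph{control} dependence, not just data dependence, and that array writes $x[e_1] := e_2$ make $x$ depend on both $x$ itself and the indices/values, so $\vec{x_i}$ had better include $y_i$ when $y_i$ is conditionally or partially updated. Second, the interaction between rule (5)'s guard-synchronization premise and the product construction: I need the invariant $\inv \models \bigwedge_{i,j} C[V_i/V] \leftrightarrow C[V_j/V]$ to guarantee lockstep execution of all $n$ copies so that the single "shared loop" in the RPC derivation faithfully models $n$ independent loops running to their (possibly different) completions — this is where I would be most careful, since it implicitly assumes each copy's loop terminates iff the others do, which again is fine for partial correctness but must be argued cleanly. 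I would therefore allocate most of the write-up to rules (1), (2), and (5), and dispatch (3), (4), (6) quickly by appeal to the product-program lemma and $post$-soundness.
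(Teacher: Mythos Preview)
Your proposal is correct and follows essentially the same approach as the paper's proof in Appendix~C: a case analysis over the six rules of Figure~\ref{fig:relpost}, invoking Theorem~\ref{thm:product-sound} (soundness of product) for rules (1) and (6), soundness of dependence/mod analysis for rule (2), and standard Hoare-logic reasoning plus commutation of disjoint-variable statements for rules (3)--(5). Your explicit generalization over residual edits is a clarification the paper leaves implicit (it introduces the prefix/suffix split ad hoc in Case~3), and your flagging of control dependence in rule~(2) and lockstep termination in rule~(5) as the non-routine steps is accurate---the paper discharges both by assumption (``assuming the correctness of the mod and dependence analysis'' and the premise $\inv \models \bigwedge_{i,j} C[V_i/V] \leftrightarrow C[V_j/V]$, respectively).
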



\subsection{Construction of Product Programs}
\label{sec:product}
In this section, we describe our method for constructing~$n$-way 
product programs. 
While there are several strategies for generating~$2$-way product 
programs in the literature (e.g.,~\cite{product1,product2}), our 
method differs from these approaches in that 
it uses similarity metrics to guide product construction and also generalizes 
these techniques to $n$-way products. The use of similarity metrics allows our method
to generate more verification-friendly product programs while obviating the need for
performing backtracking search over non-deterministic product construction rules.
%


Before we describe our product construction technique, 
we first give a simple example to illustrate how product 
construction facilitates relational verification:
\begin{example}\label{ex:motivate-product}
Consider the following programs $\stmt_1$ and $\stmt_2$:
\[
\small
\begin{array}{ll}
\stmt_1: & i_1 := 0; \ \lp{i_1<n_1}{i_1 := i_1 * x_1} \\
\stmt_2: & i_2 := 0; \ \lp{i_2<n_2}{i_2 := i_2 * x_2} \\
\end{array}
\]
and the precondition $n_1 = n_2 \land x_1 = x_2$. 
It is easy to see that $i_1$ and $i_2$ will have the 
same value after executing $\stmt_1$ and $\stmt_2$. 
Now, consider analyzing the program $\stmt_1; \stmt_2$.
%
%
While a static analyzer can \emph{in principle} infer this 
post-condition by coming up with a precise loop invariant 
that captures the exact symbolic value of $i_1$ and $i_2$ 
during each iteration, this is clearly a very difficult task. 
To see why product programs are useful, 
now consider the following program $\stmt$:
\[
\small
\begin{array}{ll}
(1) & i_1 := 0; i_2 := 0; \\
(2) & \lp{i_1<n_1 \land i_2 < n_2}{ i_1 := i_1 * x_1;  \ i_2 := i_2 * x_2;} \\
(3) & (i_1 < n_1)? \{\lp{i_1 < n_1}{i_1 := i_1 * x_1}\} \\
    & \ \ \ \ \ \ \ \ \ \ \ \ \ \ : \{ (i_2 < n_2) ? \{\lp{i_2 < n_2}{i_2 := i_2 * x_2}\}:\{\rm skip\} \} 
\end{array}
\]
Here, $\stmt$ is  equivalent to $\stmt_1; \stmt_2$ 
because it executes both loops in lockstep until one of them terminates 
and then executes the remainder of the other loop. 
While this code may look complicated, it is much easier to 
statically reason about $\stmt$ than  $\stmt_1; \stmt_2$.
In particular, since $i_1 = i_2 \land x_1 = x_2 \land n_1 = n_2$ is an 
inductive invariant of the first loop in $\stmt$, we can easily prove
that line (3) is dead code and that $i_1 = i_2$ is a 
valid post-condition of $\stmt$. 
As this example illustrates, product programs can
make relational verification easier by executing loops from different
programs in lockstep. 
\end{example} 

Our $n$-way product construction method is presented in Figure~\ref{fig:product}
using inference rules that derive judgments of the form 
$\vdash \stmt_1 \crp \ldots \crp \stmt_n \gen \stmt$
where programs $\stmt_1, \ldots, \stmt_n$ \emph{do not share any variables}
(i.e., each  $\stmt_i$ refers to variables $V_i$ such that $V_i \cap V_j = \emptyset$ for $i \neq j$).
The generated product   $\stmt$ is semantically equivalent to 
$\stmt_1; \ldots; \stmt_n$ but is constructed in  a way that makes $\stmt$
easier to be statically analyzed.
Similar to prior relational verification techniques, 
the key idea is to synchronize loops from different program 
versions as much as possible. However, our method differs from existing techniques in that it 
uses similarity metrics to 
guide product construction and generalizes them to $n$-way products.

\vspace{0.05in} \noindent
{\bf \emph{Notation.}} 
Before discussing Figure~\ref{fig:product}, we first introduce 
some useful notation: We abbreviate $\stmt_1 \crp \ldots \crp \stmt_n$ using the notation 
$\mfrak{P}^{\crp}$, and we write $\mfrak{P}$ to denote the list $(\stmt_1, \ldots, \stmt_n)$.
Also, given a statement $\stmt$, we write $\stmt[i]$ to denote the $i$'th element in the sequence (i.e., $\stmt[0]$
denotes the first element).

\vspace{0.05in} \noindent
{\bf \emph{Similarity metric.}}  As mentioned earlier, our 
algorithm uses similarity metrics between different program fragments to
guide product construction. Thus, our algorithm is parameterized by a function
$\similar \colon \stmt^* \to \real^{+}_{0}$
that returns a positive real number
representing similarity between different statements. While the precise definition of 
$\similar$ is orthogonal to our product construction algorithm, our implementation uses Levensthein distance as
the similarity metric.
%
%
%

\begin{figure}
\small
\[\arraycolsep=0.4pt
\begin{array}{lc}
(1) \ \ & \irule{
\begin{array}{c}
\vdash \stmt_1 \crp \mfrak{P}^{\crp} \gen \stmt
\end{array}
}{\vdash A;\stmt_1 \crp \mfrak{P}^{\crp} \gen A; \stmt } \\ \ \\
(2) \ \ & \irule{
\begin{array}{c}
\vdash \stmt_t;\stmt_1 \crp \mfrak{P}^{\crp} \gen \stmt'  \ \ \ \ \vdash \stmt_e;\stmt_1 \crp \mfrak{P}^{\crp} \gen \stmt''
\end{array}
}{\vdash (\ifstmt{C}{\stmt_{t}}{\stmt_{e}});\stmt_1 \crp \mfrak{P}^{\crp} \gen (\ifstmt{C}{\stmt'}{\stmt''}) } \\ \ \\ 
%
%
%
(3) \ \ & \irule{
\begin{array}{c}
\exists \stmt_i \in \mfrak{P}.\ \stmt_i[0] \neq \lp{C_i}{\stmt_{B_i}} \\
\vdash \stmt_i \crp \mfrak{(P \setminus \stmt_i)}^{\crp} \crp (\lp{C_1}{\stmt_{B_1}});\stmt_1 \gen \stmt
\end{array}
}{\vdash (\lp{C_1}{\stmt_{B_1}});\stmt_1 \crp \mfrak{P}^{\crp}  \gen \stmt} \\ \ \\ 
(4) \ \ & \irule{
\begin{array}{c}
\forall \stmt_i \in \mfrak{P}.\ \stmt_i[0] = \lp{C_i}{\stmt_{B_i}} \\
\exists \mfrak{H} \subseteq \mfrak{P}.\ \forall \mfrak{L} \subseteq \mfrak{P}.\ \simop{\mfrak{H}} \geq \simop{\mfrak{L}} \\
\vdash (\mfrak{H}[0])^{\crp} \gen \stmt' \ \ \ \ \vdash (\mfrak{H}[1\ldots])^{\crp} \crp (\mfrak{P} \setminus \mfrak{H})^{\crp} \gen \stmt''
\end{array}
}{\mfrak{P}^{\crp} \gen \stmt'; \stmt''} \\ \ \\ 
(5) \ \ & \irule{
\begin{array}{c}
\vdash \stmt_{B_1} \crp \stmt_{B_2} \gen \stmt \\
W := \lp{C_1 \land C_2}{\stmt}  \\
R :=  \ifstmt{C_1}{\lp{C_1}{\stmt_{B_1}}}{(\ifstmt{C_2}{\lp{C_2}{\stmt_{B_2}}}{{\rm skip}})}\\
\vdash W; R \crp \mfrak{P}^{\crp} \gen \stmt'
\end{array}
}{\vdash (\lp{C_1}{\stmt_{B_1}}) \crp  (\lp{C_2}{\stmt_{B_2}}) \crp \mfrak{P}^{\crp} \gen \stmt'}
\end{array} 
\]
\vspace{-0.13in}
\caption{Product construction. The base case is the trivial rule $\vdash \stmt \gen \stmt$, and we assume that every program ends in a \emph{skip} and that $\emph{skip} \crp \mfrak{P}^\crp$ is the same as  $\mfrak{P}^\crp$.}
\label{fig:product}
\vspace{-0.2in}
\end{figure}

\vspace{0.05in}\noindent
{\bf \emph{Product construction algorithm.}} 
We are now ready to explain the product construction rules shown in Figure~\ref{fig:product}.
Rule (1) is quite simple and deals with the case where the 
first program starts with an atomic statement $A$. 
Since we can always compute a precise post-condition for atomic statements, 
it is not necessary to ``synchronize'' $A$ with any of the statements 
from other programs.
Therefore, we first compute the product program 
$\stmt_1 \crp \mfrak{P}^{\crp}$,
\ie $\stmt_1 \crp \stmt_2 \crp \ldots \crp \stmt_n$,
and then sequentially compose it with $A$.

Rule (2) 
considers the case where the first program starts with a conditional 
$\ifstmt{C}{\stmt_{t}}{\stmt_{e}}$.
In general, $\stmt_{t}$ and $\stmt_{e}$ may contain loops; 
therefore, there may be an opportunity to synchronize any 
loops within $\stmt_{t}$ and $\stmt_{e}$ with loops from 
$\mfrak{P} = \stmt_2, \ldots, \stmt_n$. Therefore, we construct the 
product program as $\ifstmt{C}{\stmt'}{\stmt''}$ where $\stmt'$ (resp. $\stmt''$) is the product of
the then (resp. else) branch with $\mfrak{P}^\crp$.~\footnote{Observe that our handling of if statements can cause a blow-up in program size, since we essentially embed the continuation $\stmt_1$ inside the then and else branches. However, because our product construction applies to small program fragments, we have not found it to be a problem in practice.}
%
%
%

Because the main point of product construction is to generate a verification-friendly program by 
executing loops in lock-step, all of the remaining rules deal with loops. Specifically, rule (3)
considers the case where the first program starts with a 
loop but there is some program~$\stmt_i$ in ~$\mfrak{P}= (\stmt_2, \ldots, \stmt_n)$ that 
does not start with a loop. In this case, we want to ``get rid of'' program $\stmt_i$ by using rules 
(1) and (2); thus, we move $\stmt_i$ to the beginning and construct the product program $\stmt$ for 
$\stmt_i \crp (\mfrak{P}\backslash\stmt_i)^\crp \crp \while{C_1}{\stmt_{B_1}}; \stmt_1$. 

Before we continue to the other rules, we  make two important observations about rule (3). First, this rule exploits the commutativity and associativity of the $\crp$ operator~\footnote{Recall that different programs do not share variables}; however, it uses these properties in a restricted form by applying them only where they are useful. Second, after exhaustively applying rules (1), (2), and (3) on some $\mfrak{P}_0^\crp$, note that we will end up with a new $\mfrak{P}_1^\crp$ where \emph{all} programs in $\mfrak{P}_1$ are guaranteed to start with a loop.

%
%
Rule (4) considers the case where all programs start with a loop and utilizes the similarity metric~$\similar$ to identify which loops to synchronize. In particular, let $H$ be the subset of the programs in $\mfrak{P}$ that are ``most similar" according to our similarity metric. Since all programs in $H$ start with a loop, we first construct the product program $\stmt'$ of these loops. We then construct the product program $\stmt''$ for the remaining programs $\mfrak{P} \backslash H$ and the remaining parts of the programs in~$H$.

%
%

The final rule (5) defines what it means to 
``execute loops in lockstep as much as possible''. Given two programs that
start with loops $\while{C_1}{\stmt_1}$ and $\while{C_2}{\stmt_2}$, we first construct the product $\stmt_1 \crp \stmt_2$ 
and generate the synchronized loop as 
$\while{C_1 \land C_2}{\stmt_1 \crp \stmt_2}$. 
Since these loops may not execute the same number of 
times, we still need to generate the ``continuation'' $R$, which 
executes any remaining iterations of one of the loops. 
Thus, $W; R$ in rule (5) is semantically 
equivalent to $\while{C_1}{\stmt_1}; \while{C_2}{\stmt_2}$. 
Now, since there may be further synchronization opportunities 
between $W; R$ and the remaining programs $\stmt_3, \ldots, \stmt_n$,
we obtain the final product program by computing 
$W; R \crp \stmt_3 \crp \ldots \crp \stmt_n$. 

\begin{example}
Consider again the programs $\stmt_1$ and $\stmt_2$ from 
Example~\ref{ex:motivate-product}.
We can use rules (1) and (5)  from Figure~\ref{fig:product} to compute  the product program
for $\stmt_1 \crp \stmt_2$. The resulting product is exactly the 
program 
$\stmt$ shown in Example~\ref{ex:motivate-product}. 
\end{example}

Since rules (4) or (5) are both applicable when all programs start with a loop,
our product construction algorithm first applies rule (4) and then uses rule (5) when 
constructing the product for $(H[0])^\crp$ in rule (4). Thus, our  method
ensures that loops that are most similar to each other are executed in lockstep, which
in turn greatly facilitates verification.

\remove{
\begin{figure}
\small
\[\arraycolsep=0.4pt
\begin{array}{lc}
(1) \ \ & \irule{
\begin{array}{c}
\exists i, j. \ |\stmt_i| \neq |\stmt_j| \\
(\stmt_1', \ldots, \stmt_n') := \mathsf{IntroduceSkips}(\stmt_1, \ldots, \stmt_n) \\
\vdash \stmt'_1 \crp \ldots \crp \stmt'_n \gen \stmt
\end{array}
}{\vdash \stmt_1 \crp \ldots \crp \stmt_n  \gen \stmt } \\ \ \\ 
(2) \ \ & \irule{
\begin{array}{c}
\forall i,j. \ |\stmt_i| = |\stmt_j| = k > 1 \\
\vdash \stmt_1[i] \crp \ldots \crp \stmt_n[i] \gen \stmt'_i
\end{array}
}{\vdash \stmt_1 \crp \ldots \crp \stmt_n  \gen \stmt'_1; \ldots; \stmt'_k } \\ \ \\ 
(3) \ \ & \irule{
\begin{array}{c}
\vdash \stmt_2 \crp \ldots \crp \stmt_n \gen \stmt
\end{array}
}{\vdash A \crp  \stmt_2 \crp \ldots \crp \stmt_n  \gen A; \stmt } \\ \ \\ 
(4) \ \ & \irule{
\begin{array}{c}
\vdash \stmt_2 \crp \ldots \crp \stmt_n \gen \stmt \\
\vdash \stmt_{11} \crp \stmt \gen \stmt' \ \ \ \ \vdash \stmt_{12} \crp \stmt \gen \stmt''
\end{array}
}{\vdash (\ifstmt{C}{\stmt_{11}}{\stmt_{12}}) \crp  \stmt_2 \crp \ldots \crp \stmt_n  \gen 
(\ifstmt{C}{\stmt'}{\stmt''}) } \\ \ \\ 
(5) \ \ & \irule{
\begin{array}{c}
\vdash \stmt_1 \crp \stmt_2 \gen \stmt \\
W := \lp{C_1 \land C_2}{\stmt}  \\
R :=  \ifstmt{C_1}{\lp{C_1}{\stmt_1}}{(\ifstmt{C_2}{\lp{C_2}{\stmt_2}}{{\rm skip}})}\\
\vdash W; R \crp \stmt_3 \crp \ldots \crp \stmt_n \gen \stmt'
\end{array}
}{\vdash (\lp{C_1}{\stmt_1}) \crp  (\lp{C_2}{\stmt_2}) \crp \stmt_3 \crp \ldots \crp \stmt_n  \gen \stmt'} \\ \ \\ 
(6) \ \ & \irule{
\begin{array}{c}
\vdash \stmt_2 \crp  (\lp{C_1}{\stmt_1}) \crp \ldots \crp \stmt_n \gen \stmt
\end{array}
}{\vdash (\lp{C_1}{\stmt_1}) \crp  \stmt_2 \crp  \ldots \crp \stmt_n  \gen \stmt} 
\end{array} 
\]
\caption{Rules describing product construction. The base case is the trivial rule $\vdash \stmt \gen \stmt$ }
\label{fig:product}\end{figure}

The first rule in Figure~\ref{fig:product} deals with the case where some of the programs $\stmt_1, \ldots, \stmt_n$ have different lengths. In this case, our approach heuristically introduces skip statements to ensure that  all programs have the same length and then constructs the product program for the modified statements $\stmt_1', \ldots, \stmt_n'$. We describe the heuristics used in our implementation in Section~\ref{sec:impl}.

The second rule handles the case  where we have multiple sequence statements of equal length. Specifically, if each $\stmt_i$ is of the form $\stmt_{i1}; \ldots; \stmt_{ik}$, we construct a separate product program $\stmt_{1j} \crp \ldots \crp\stmt_{nj}$ for each $j \in [1, k]$ and then obtain the final product by sequentially composing them.

Next, let us consider the case where the first program  is an atomic statement $A$. Since we can always compute a precise post-condition for atomic statements, it is not necessary to ``synchronize'' $A$ with any of the statements from other programs. Therefore, we first compute the  product program $\stmt_2 \crp \ldots \crp \stmt_n$ and then sequentially compose it with $A$.

Rule (4) in Figure~\ref{fig:product} concerns the case where the first program is a conditional  $\ifstmt{C}{\stmt_{11}}{\stmt_{12}}$. In general, $\stmt_{11}$ and $\stmt_{12}$ may contain nested loops; therefore, there may be an opportunity to synchronize any nested loops within $\stmt_{11}$ and $\stmt_{12}$ with loops from $\stmt_2, \ldots, \stmt_n$.  Hence, we first compute the  product of $\stmt_2, \ldots, \stmt_n$ as $\stmt$ and then obtain the new then and else branches as $\stmt_{11} \crp \stmt$ and $\stmt_{12} \crp \stmt$ respectively. 

The next rule (5) describes how to construct a product program if the first two programs are loops of the form $\while{C_1}{\stmt_1}$ and $\while{C_2}{\stmt_2}$. In this case, we would like to execute these two loops in lockstep to the extent possible. Therefore, we first construct the product $\stmt_1 \crp \stmt_2$ and generate the synchronized loop as $\while{C_1 \land C_2}{\stmt_1 \crp \stmt_2}$. Of course, since these loops may not execute the same number of times, we still need to generate the ``continuation'' $R$, which executes any remaining iterations of one of the loops. Thus, it is easy to see that $W; R$ in rule (5) is semantically equivalent to $\while{C_1}{\stmt_1}; \while{C_2}{\stmt_2}$. Now, observe that there may be further synchronization opportunities between $W; R$ and the remaining programs $\stmt_3, \ldots, \stmt_n$ if they contain nested loops. Hence, we obtain the final product program by computing $W; R \crp \stmt_3 \crp \ldots \crp \stmt_n$. 

The final rule also deals with loops, but only applies in the case where the first program is a loop but the second program $\stmt_2$ is not. In this case, we simply swap the first and second program so that $\stmt_2$ either gets ``consumed'' or we find a synchronization opportunity between a  loop nested inside $\stmt_2$ and the first program $\while{C_1}{\stmt_1}$.
}

\begin{theorem}{\bf (Soundness of product)}
\label{thm:product-sound}
Let $\stmt_1, \ldots, \stmt_n$ be statements with disjoint
variables, and let 
$\vdash \stmt_1 \crp \ldots \crp \stmt_n \gen \stmt$ according 
to~\Cref{fig:product}. 
Then, for all valuations $\sigma$, we have 
$\sigma \vdash \stmt_1; \ldots; \stmt_n \Downarrow \sigma'$ iff
$\sigma \vdash \stmt \Downarrow \sigma'$.
\end{theorem}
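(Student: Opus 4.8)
The plan is to prove the statement by structural induction on the derivation of $\vdash \stmt_1 \crp \ldots \crp \stmt_n \gen \stmt$ given by the rules of Figure~\ref{fig:product}, strengthening the statement so that the induction goes through. The natural strengthening is: whenever $\vdash \mfrak{P}^\crp \gen \stmt$ where $\mfrak{P} = (\stmt_1,\ldots,\stmt_n)$ with pairwise-disjoint variable sets, then for every valuation $\sigma$ and every $\sigma'$, we have $\sigma \vdash \stmt_1; \ldots; \stmt_n \Downarrow \sigma'$ iff $\sigma \vdash \stmt \Downarrow \sigma'$. Since the programs do not share variables, the sequential composition $\stmt_1; \ldots; \stmt_n$ is insensitive to the order of the $\stmt_i$: running $\stmt_i$ only touches $V_i$, so the $\Downarrow$ relation of any permutation of the $\stmt_i$'s coincides. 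I would record this as a preliminary lemma (call it the \emph{commutation lemma}), proved by a short induction on $n$ using the determinism of $\Downarrow$ on structured programs and disjointness of footprints; it is exactly what licenses rules (3), (4), and (5) to reorder and regroup the $\stmt_i$.

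The core argument is then a case split on the last rule applied. For the base case $\vdash \stmt \gen \stmt$ there is nothing to show. For rule (1), $A;\stmt_1 \crp \mfrak{P}^\crp \gen A;\stmt$ where $\vdash \stmt_1 \crp \mfrak{P}^\crp \gen \stmt$: by the operational semantics of sequencing, $\sigma \vdash (A;\stmt_1);\stmt_2;\ldots \Downarrow \sigma'$ iff there is $\sigma_1$ with $\sigma \vdash A \Downarrow \sigma_1$ and $\sigma_1 \vdash \stmt_1;\stmt_2;\ldots \Downarrow \sigma'$; apply the IH to the tail and reassemble to get $\sigma \vdash A;\stmt \Downarrow \sigma'$. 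Rule (2) is the analogous case split on whether $\sigma \vdash C \Downarrow \mathrm{true}$ or $\mathrm{false}$, using the IH on each branch premise. Rule (3) is immediate from the commutation lemma (moving $\stmt_i$ to the front and wrapping the loop-headed program as a tail) plus the IH. Rule (4) requires unfolding the definition of $\mfrak{H}$, $\mfrak{H}[0]$, $\mfrak{H}[1\ldots]$: by the commutation lemma the sequential composition of all programs equals the sequential composition of $\mfrak{H}[0]$ followed by $(\mfrak{H}[1\ldots])$ followed by $(\mfrak{P}\setminus\mfrak{H})$ (here one also uses that for a single program $\stmt_i \in \mfrak{H}$, $\stmt_i = \stmt_i[0]; \stmt_i[1\ldots]$, i.e. that a nonempty program splits as head-statement followed by rest); then apply the two IHs to $\stmt'$ and $\stmt''$ and combine via the sequencing rule.

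The main obstacle, and the step deserving the most care, is rule (5): the loop-synchronization rule. Here I must show that $W;R$ — i.e. $\lp{C_1 \land C_2}{\stmt_{B_1}\crp\stmt_{B_2}}$ followed by the conditional-continuation $R$ — has the same $\Downarrow$ relation as $\lp{C_1}{\stmt_{B_1}}; \lp{C_2}{\stmt_{B_2}}$. The clean way is a nested induction: first, an inner induction on the number of iterations of the synchronized loop $W$ showing that $\sigma \vdash W \Downarrow \tau$ iff there is an intermediate $\tau_0$ reached by running $\lp{C_1}{\stmt_{B_1}}$ for $k$ iterations and $\lp{C_2}{\stmt_{B_2}}$ for $k$ iterations in lockstep — crucially using disjointness of $V_1,V_2$ so that running $\stmt_{B_1}\crp\stmt_{B_2}$ (which by IH equals $\stmt_{B_1};\stmt_{B_2}$) has the same effect as one iteration of each, and using that $C_1$ depends only on $V_1$ and $C_2$ only on $V_2$ so $C_1 \land C_2$ failing means (at least) one of the two individual guards has failed. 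Then $R$ picks out exactly the case analysis on which guard is still live and drains the remaining iterations of that one loop — another small induction on its iteration count, again using that the already-terminated loop's guard stays false because its variables are untouched. Putting $W$ and $R$ together with the sequencing semantics, and then invoking the outer IH for the premise $\vdash W;R \crp \mfrak{P}^\crp \gen \stmt'$ together with the commutation lemma to reabsorb $\stmt_3,\ldots,\stmt_n$, completes the case. I would also note one subtlety to handle explicitly: the ``iff'' includes the non-termination direction, so when $\lp{C_1}{\stmt_{B_1}}$ (say) diverges, neither $W$ nor the original sequential program produces any $\sigma'$, and this must be argued symmetrically rather than only reasoning about terminating runs; phrasing everything in terms of the derivability of $\sigma \vdash \cdot \Downarrow \sigma'$ makes this direction a contrapositive of the finite-iteration correspondence.
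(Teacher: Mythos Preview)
Your proposal is correct and follows essentially the same approach as the paper: structural induction over the derivation rules of Figure~\ref{fig:product}, with the commutation lemma (disjoint-variable programs can be reordered) as the key auxiliary fact used in rules (3), (4), and (5). Your treatment of rule (5) is in fact more careful than the paper's, which simply asserts that $\while{C_1}{\stmt_{B_1}};\while{C_2}{\stmt_{B_2}}$ is semantically equivalent to $W';R$ (with $W'$ the lockstep loop over $\stmt_{B_1};\stmt_{B_2}$) and then invokes the inductive hypothesis on the body; your explicit inner induction on iteration count and your attention to the non-termination direction are genuine improvements in rigor over the appendix proof.
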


\section{Edit Generation}\label{sec:edit-gen}
The verification algorithm we described in Section~\ref{sec:verify} requires all program versions to be represented as edits applied to a shared program with holes. This representation is very important because it allows  our verification algorithm to reason about modifications to different program parts in a compositional way. In this section, we describe an $n$-way AST differencing algorithm that can be used to generate the desired program representation.

\begin{algorithm}[t]
\caption{$n$-way AST differencing algorithm}\label{alg:editGen}
\begin{algorithmic}[1]

\vspace{0.05in}
\Procedure{NDiff}{$\stmt_1, \ldots, \stmt_n$} 
\vspace{0.03in}
\State $\hstmt \gets \stmt_1$; \ \ \ $\vec{\edit} \gets []; \ \ \ i \gets 2; $
\While{$i \leq n$}
\vspace{0.03in}
\State $(\hstmt, \vec{\edit}) 
\gets \GenEdit(\hstmt, \stmt_i, \vec{\edit})$ 
\EndWhile
\vspace{0.03in}
\State \Return $(\hstmt, \vec{\edit})$
\EndProcedure

\Procedure{$\GenEdit$}{$\hstmt, \stmt, \edit_1, \ldots, \edit_k$}
\vspace{0.03in}
\State ($\hstmt', \edit, \hedit$) $:=$ $\DiffTwo(\stmt, \hstmt)$\label{line:calldiff2} 
\For{$i$ in $[1, k]$} \label{line:loopStart}
\State $\edit_i' := \Compose(\hedit, \edit_i)$ \label{line:callCompose}
\EndFor 
\State \Return ($\hstmt', \edit_1', \ldots, \edit_k', \edit$)
\EndProcedure

\Procedure{$\Compose$}{$\hedit, \edit$}
\vspace{0.03in}
\If{$\hedit = [ \ ]$} \ \Return $[  \ ] $ 
\ElsIf{$\head{\hedit} 	= \hole$} 
\State \Return $\head{\edit} :: \Compose(\tail{\hedit}, \tail{\edit})$ \label{line:callCompose1}
\Else \ \Return $\head{\hedit} :: \Compose(\tail{\hedit}, \edit)$ \label{line:callCompose2}
\EndIf
\EndProcedure


\end{algorithmic}
\label{alg:editGen}
\end{algorithm}

Our $n$-way diff algorithm is presented in Algorithm~\ref{alg:editGen}. Procedure {\sc NDiff} takes as input $n$ programs $\stmt_1, \ldots, \stmt_n$ and returns a pair $(\hstmt, \vec{\edit})$ where $\hstmt$ is a shared program with holes and $\vec{\edit}$ is a list of edits such that $\hstmt[\edit_i] = \stmt_i$. The loop inside the {\sc NDiff} procedure maintains the  key invariant
$
\forall j. \ 1 \leq j < i \Rightarrow \hstmt[\edit_j] = \stmt_j
$.
Thus, upon termination, {\sc NDiff} guarantees  that $\hstmt[\edit_i] = \stmt_i$ for all $i \in [1,n]$.

The bulk of the work of the {\sc NDiff} procedure is performed by the auxiliary \textsf{GenEdit} function, which uses a 2-way AST differencing algorithm to extend the diff from $k$ to $k+1$ programs. Specifically, \textsf{GenEdit} takes as input a new program $\stmt$ as well as the  diff of the first $k$ programs, where the diff is represented as a shared program $\hstmt$ with holes as well as edits $\edit_1, \ldots, \edit_k$. The key idea underlying \textsf{GenEdit} is to use a standard 2-way AST diff algorithm to compute the diff between  $\hstmt$ and the new program $\stmt$ and then use the result to update the existing edits $\edit_1, \ldots, \edit_k$.

In more detail, the \textsf{Diff2} procedure used in  \textsf{GenEdit} yields the 2-way diff of $\hstmt$ and $\stmt$ as a triple $(\hstmt', \edit, \hat{\edit})$ such that $\hstmt'[\edit] = \stmt$ and $\hstmt'[\hat{\edit}] = \hstmt$.~\footnote{Existing 2-way AST diff algorithms can be adapted to produce diffs in this form. We provide our  \textsf{Diff2} implementation under supplementary materials.} The core insight underlying  \textsf{GenEdit}  is to use $\hat{\edit}$ to update the existing edits $\edit_1, \ldots, \edit_k$ for the first $k$ programs. Specifically, we use a procedure \textsf{Compose} to combine each existing edit $\edit_i$ with the output $\hat{\edit}$ of \textsf{2Diff}. The  \textsf{Compose} procedure is defined recursively and inspects the first element of $\hedit$ in each recursive call. If the first element is a hole, we preserve the existing edit; otherwise, we use the edit from $\hedit$.  Thus, if  $\mathsf{Compose}(\hedit, \edit_i)$ yields $\edit_i'$, we have   $\hstmt'[\edit'_i] = \hstmt[\edit_i]$. In other words, the \textsf{Compose} procedure allows us to update the diff of the first $k$ programs to generate a sound diff of  $k+1$ programs.

\begin{theorem}{\bf (Soundness of NDiff)}\label{thm:ndiff}
Let $\mathsf{NDiff}(\stmt_1, \ldots, \stmt_n)$ be $(\hstmt, \vec{\edit})$. Then we have $\hedit[\edit_i] = \stmt_i$ for all $i \in [1, n]$.
\end{theorem}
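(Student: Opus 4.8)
# Proof Proposal for Theorem (Soundness of NDiff)

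The plan is to prove the theorem by establishing the loop invariant of the \textsc{NDiff} procedure mentioned informally in the text, namely that after processing the first $i-1$ programs, the current shared program $\hstmt$ and edit vector $\vec{\edit}$ satisfy $\hstmt[\edit_j] = \stmt_j$ for all $j < i$. I would proceed by induction on $i$, with the crux being the inductive step, which reduces to a correctness property of \textsc{GenEdit}, which in turn rests on two lemmas: one about \textsc{Diff2} (assumed by the footnote: it returns $(\hstmt', \edit, \hedit)$ with $\hstmt'[\edit] = \stmt$ and $\hstmt'[\hedit] = \hstmt$) and one about \textsc{Compose} that I need to prove from its recursive definition.

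First I would state and prove the key \textsc{Compose} lemma: if $\hstmt'[\hedit] = \hstmt$ and $\hstmt[\edit_i] = \stmt_i$ and $|\edit_i| = \numHoles{\hstmt}$, then $\mathsf{Compose}(\hedit, \edit_i) = \edit_i'$ satisfies $\hstmt'[\edit_i'] = \hstmt[\edit_i] = \stmt_i$. This is proved by structural induction following the three cases of \textsc{Compose}. The base case $\hedit = [\,]$ forces $\hstmt'$ to be hole-free, so $\hstmt'[[\,]] = \hstmt'$; since $\hstmt'[\hedit] = \hstmt$ with an empty edit, $\hstmt' = \hstmt$ and the claim is immediate. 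When $\head{\hedit} = \hole$, applying $\hedit$ to $\hstmt'$ consumes this first hole with the next statement of $\hedit$ (which is itself, a hole placeholder?) — here I need to be careful about how $\applyEdit$ traverses: the first hole of $\hstmt'$ gets filled, and \textsc{Compose} correspondingly takes $\head{\edit}$ from the old edit $\edit_i$ to fill the corresponding hole, then recurses on the tails. When $\head{\hedit}$ is a concrete statement, that statement is spliced into $\hstmt'$ (eliminating a hole without consuming anything from $\edit_i$), and \textsc{Compose} keeps $\head{\hedit}$ and recurses with $\edit_i$ unchanged. In both recursive cases I apply the induction hypothesis to the tail of the shared program / $\hedit$. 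I expect the main obstacle here to be making the bookkeeping precise: specifically, relating the depth-first hole-traversal order of $\applyEdit$ (Figure~\ref{alg:apply}) to the left-to-right list order of $\hedit$ and $\edit_i$, and confirming that \textsc{Diff2} produces $\hedit$ in exactly the order consistent with this traversal. This is the delicate point that the footnote's phrase ``existing 2-way AST diff algorithms can be adapted to produce diffs in this form'' glosses over, so I would make the ordering convention explicit as a hypothesis on \textsc{Diff2}'s output.

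Next I would assemble the \textsc{GenEdit} correctness statement: given $\hstmt$ and edits $\edit_1, \ldots, \edit_k$ with $\hstmt[\edit_j] = \stmt_j$ for all $j \in [1,k]$, the call $\GenEdit(\hstmt, \stmt, \edit_1, \ldots, \edit_k)$ returns $(\hstmt', \edit_1', \ldots, \edit_k', \edit)$ with $\hstmt'[\edit_j'] = \stmt_j$ for $j \in [1,k]$ and $\hstmt'[\edit] = \stmt$. The second conjunct is exactly the \textsc{Diff2} guarantee. The first follows by applying the \textsc{Compose} lemma to each $i \in [1,k]$, using $\hstmt'[\hedit] = \hstmt$ (from \textsc{Diff2}) together with $\hstmt[\edit_i] = \stmt_i$ (the hypothesis).

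Finally I would close the main induction. The base case sets $\hstmt \gets \stmt_1$, $\vec{\edit} \gets [\,]$, $i \gets 2$, and the invariant ``$\hstmt[\edit_j] = \stmt_j$ for $1 \le j < 2$'' holds trivially since $\hstmt = \stmt_1$ (a hole-free program, applied to the empty edit, yields itself). The inductive step: assuming the invariant for the current $i$, one iteration of the \texttt{while} loop calls $\GenEdit(\hstmt, \stmt_i, \vec{\edit})$; by \textsc{GenEdit} correctness the new $\hstmt', \vec{\edit}'$ satisfy $\hstmt'[\edit_j'] = \stmt_j$ for $j \in [1, i-1]$ and $\hstmt'[\edit_i] = \stmt_i$, i.e., the invariant for $i+1$. (Here I note the apparent omission in the pseudocode that $i$ should be incremented; I would treat this as the intended semantics.) Upon termination $i = n+1$, so the invariant gives $\hstmt[\edit_j] = \stmt_j$ for all $j \in [1,n]$, which is the theorem. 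The only genuinely non-routine part of the whole argument is the \textsc{Compose} lemma and its reliance on the traversal-order convention for \textsc{Diff2}'s output; everything else is straightforward induction and unfolding of definitions.
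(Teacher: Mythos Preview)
Your proposal is correct and follows essentially the same decomposition as the paper: a \textsf{Compose} lemma proved by induction on $\hedit$ (three cases), a \textsf{GenEdit} lemma derived from the \textsf{Diff2} contract plus the \textsf{Compose} lemma, and then the loop invariant for \textsc{NDiff}. One point worth tightening: the paper's \textsf{Compose} lemma also carries the size postcondition $|\edit'| = |\hedit|$, and \textsf{GenEdit} correspondingly establishes $|\edit_i'| = \numHoles{\hstmt'}$; you will need these to re-establish the precondition $|\edit_i| = \numHoles{\hstmt}$ for the \emph{next} iteration of the \textsc{NDiff} loop, so include them in your invariant rather than treating them as incidental bookkeeping. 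Your concern about traversal order is handled in the paper simply by stating \textsf{Diff2}'s contract in terms of \textsf{ApplyEdit} itself (so consistency is by fiat), which is exactly the ``hypothesis on \textsf{Diff2}'s output'' you propose to assume.
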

\section{Implementation}\label{sec:impl} 

We implemented the techniques proposed in this paper in a tool called \toolname\ for checking semantic conflict-freedom of Java programs. \toolname\ is written in Haskell and uses the Z3 SMT solver~\cite{z3}. In what follows, we describe relational invariant generation, our handling of various aspects of the Java language and other implementation choices.

\vspace{0.05in} \noindent
{\bf \emph{Relational invariant generation.}} The RPC computation engine from Section~\ref{sec:relpost} requires an inductive loop invariant relating  variables from the four program versions. Our implementation automatically infers relational loop invariants using the Houdini framework for (monomial) predicate abstraction~\cite{Flanagan01}. 
Specifically, we consider predicate templates of the form $x_i = x_j$ relating values of the same variable from different program versions, and compute the strongest conjunct that satisfies the conditions of rule (5) of Figure~\ref{fig:relpost}.
 
\vspace{0.05in} \noindent
{\bf \emph{Modeling the heap and collections.}} 
As standard in prior verification literature~\cite{Flanagan:2002:ESC:512529.512558}, we model each field $f$ in the program as follows: We introduce a map $f$ from object identifiers to values and model reads and writes to the map  using the $\sel$ and $\update$ functions in the theory of arrays.  
Similarly, our implementation  models collections, such as \mytt{ArrayList} and \mytt{Queue}, using arrays. 
Specifically, we use an array to represent the contents of the collection and use scalar variables to model the size of the collection as well as the current position of an iterator over the collection~\cite{dillig-popl11}. 

\vspace{0.05in} \noindent
{\bf \emph{Side effects of a method.}} Our formalization uses an {\tt out} array to model all relevant side effects of a method. 
Since real Java programs do not contain such a construct, our implementation checks semantic conflict freedom on the method's return value, the final state of the receiver object as well as any field modified in the method. 

\vspace{0.05in} \noindent
{\bf \emph{Analysis of shared statements.}} Recall that our technique
abstracts away shared program statements using uninterpreted functions (rule (2) from  Figure~\ref{fig:relpost}). However, because unconditional use of such abstraction can  result in false positives, our implementation  checks for certain conditions before applying rule (2) from Figure~\ref{fig:relpost}. Specifically, given precondition $\phi$ and variables $V$ accessed by shared statement $S$, our implementation applies rule (2) only when  $\phi$ implies semantic conflict freedom on all variables in set $V$; 
otherwise, our implementation falls back on product construction (i.e., rule (6) from Figure~\ref{fig:relpost}). While this check fails rarely in practice, it is nonetheless useful for avoiding false positives.





\vspace{-0.1in}
\subsection{Limitations}
\label{sec:limitations}

Our current prototype implementation has a few limitations:

\vspace{0.05in} \noindent
{\bf \emph{Analysis scope.}} Because \toolname\ only analyzes the class file associated with the modified procedure, it may suffer from both false positives and negatives. In particular, our analysis results are only sound  under the assumption that the external callees from other classes have \emph{not} been modified. 

\vspace{0.05in} \noindent
{\bf \emph{Changes to method signature.}}
\toolname \ 
currently does not support renamed methods or methods with parameter reordering, introduction, or deletion.
However, our tool does not place any requirements on the mapping of local variables. 
Similarly, new fields can be introduced or deleted in different variants --- we assume they are present in all four versions and that they start out in an arbitrary but equal state.  

\vspace{0.05in} \noindent
{\bf \emph{Concurrency, termination, and exceptions.}}
Neither our formalism nor our prototype implementation support sound reasoning in the presence of concurrency.
Our soundness claims also rely on the assumption that none of the variants introduce non-terminating behavior.
Finally, although exceptions can be conceptually desugared in our formalism, our  implementation does not handle exceptional control flow. 

\section{Experimental Evaluation}
\label{sec:eval}

To assess the usefulness of the proposed method, we perform a 
series of three experiments. 
In our first experiment, we use \toolname\ to verify semantic 
conflict-freedom of  merges collected from Github commit histories.
In our second experiment, we run  \toolname\ on erroneous
merge candidates generated by {\tt \small kdiff3}~\cite{kdiff3}, 
a widely-used textual merge tool. 
Finally, in our third experiment, we assess the scalability of 
our method and the importance of various design choices.
All experiments are performed on Quad-core Intel Xeon CPU 
with 2.4 GHz and 8 GB memory.

\subsection{Evaluation on Merge Candidates from Github}
\label{subs:github}

To perform our first experiment, we implemented a crawler that 
examines git merge commit histories and extracts interesting 
methods that have the potential to violate conflict freedom. 
Specifically, our crawler considers a merge scenario to be 
relevant if (a) a method is modified by \emph{both} variants in 
different ways, (b) this method involves externally visible side
effects~\footnote{Our crawler considers a method to have side-effects 
if it its return value is not {\tt void} or if it makes an assignment 
to a field.}, (c) the merge candidate is  different from either of 
the variants, and (d) the code does not involve features that are not
handled by our prototype. 

To perform this experiment, we run our crawler on nine popular Java 
applications, namely 
Elasticsearch~\cite{elasticsearch},
libGDX~\cite{libgdx}, 
iosched~\cite{iosched}, 
kotlin~\cite{kotlin}, 
MPAndroidChart~\cite{mpandroidchart},
okhttp~\cite{okhttp},
retrofit~\cite{retrofit}, 
RxJava~\cite{rxjava} and the 
Spring Boot framework~\cite{spring-boot}.
Out of 1998 merge instances where a Java source 
file is modified in both variants, 235 cases involve 
modifications to the same method where the merge differs from Base, A and B.
After filtering methods with no side-effects or containing unhandled features, 
we obtain a total of 52 benchmarks and evaluate \toolname\ on all of them.~\footnote{All benchmarks
can be found under supplementary materials.}

{\small
\begin{table*}
   \centering
   \centering
      \begin{tabular}{|c | c | c | c | c | c || c | c | c | c| c | c| c|}
      \hline
       ID & App & LOC  & Time (s) &  $\begin{array}{c}  {\rm{Result}} \\ (\scriptsize \textsc{SafeMerge}) \end{array}$ & $\begin{array}{c}  {\rm{Result}} \\ (\scriptsize \texttt{kdiff3}) \end{array}$ & ID & App & LOC & Time (s) & $\begin{array}{c}  {\rm{Result}} \\ (\scriptsize \textsc{SafeMerge}) \end{array}$ & 
       $\begin{array}{c}  {\rm{Result}} \\ (\scriptsize \texttt{kdiff3}) \end{array}$ \\
      \hline
      \hline
          1  & ESearch & 18  & 0.05 & \cmark & \xmark &      27 & libgdx    & 30  & 0.12 & \cmark & \cmark \\
          2  & ESearch & 25  & 0.07 & \cmark & \cmark &      28 & libgdx    & 32  & 0.21 & \cmark & \cmark \\
          3  & ESearch & 101 & 0.20 & \cmark & \cmark &      29 & libgdx    & 71  & 0.16 & \cmark & \cmark \\
          4  & ESearch & 63  & 0.49 & \cmark & \cmark &      30 & MPAndroid & 47  & 0.44 & \xmark & \cmark \\
          5  & ESearch & 90  & 4.45 & \cmark & \cmark &      31 & MPAndroid & 66  & 0.17 & \cmark & \cmark \\
          6  & ESearch & 136 & 4.07 & \cmark & \cmark &      32 & MPAndroid & 109 & 0.16 & \cmark & \cmark \\
          7  & ESearch & 15  & 2.09 & \cmark & \cmark &      33 & MPAndroid & 44  & 0.10 & \cmark & \cmark \\
          8  & ESearch & 30  & 0.11 & \xmark & \xmark &      34 & MPAndroid & 62  & 0.16 & \cmark & \cmark \\
          9  & ESearch & 25  & 0.09 & \xmark & \xmark &      35 & MPAndroid & 43  & 0.11 & \cmark & \xmark \\
          10 & ESearch & 21  & 0.15 & \xmark & \xmark &      36 & MPAndroid & 35  & 0.23 & \xmark & \xmark \\
          11 & iosched & 63  & 0.19 & \cmark & \cmark &      37 & MPAndroid & 37  & 0.39 & \xmark & \xmark \\
          12 & iosched & 64  & 0.07 & \cmark & \cmark &      38 & okhttp    & 28  & 0.10 & \xmark & \cmark \\
          13 & kotlin  & 96  & 0.16 & \xmark & \cmark &      39 & retrofit  & 66  & 1.67 & \cmark & \cmark \\
          14 & kotlin  & 54  & 0.57 & \cmark & \cmark &      40 & retrofit  & 78  & 1.76 & \cmark & \cmark \\
          15 & kotlin  & 53  & 0.48 & \cmark & \cmark &      41 & RxJava    & 28  & 0.20 & \cmark & \cmark \\
          16 & kotlin  & 53  & 0.11 & \cmark & \cmark &      42 & spring    & 107 & 0.12 & \cmark & \cmark \\
          17 & kotlin  & 104 & 0.49 & \cmark & \cmark &      43 & spring    & 77  & 0.23 & \xmark & \xmark \\
          18 & kotlin  & 86  & 0.31 & \cmark & \cmark &      44 & spring    & 82  & 0.15 & \cmark & \cmark \\
          19 & kotlin  & 127 & 4.19 & \cmark & \xmark &      45 & spring    & 81  & 0.21 & \cmark & \cmark \\
          20 & kotlin  & 56  & 0.62 & \cmark & \cmark &      46 & spring    & 44  & 0.15 & \cmark & \xmark \\
          21 & kotlin  & 11  & 0.06 & \cmark & \cmark &      47 & spring    & 37  & 0.30 & \cmark & \xmark\\
          22 & kotlin  & 77  & 0.18 & \cmark & \cmark &      48 & spring    & 42  & 0.07 & \cmark & \cmark \\
          23 & kotlin  & 11  & 0.06 & \cmark & \cmark &      49 & spring    & 36  & 0.06 & \cmark & \cmark \\
          24 & kotlin  & 38  & 0.15 & \cmark & \cmark &      50 & spring    & 64  & 0.20 & \xmark & \cmark \\
          25 & kotlin  & 67  & 0.33 & \cmark & \xmark &      51 & spring    & 13  & 0.09 & \xmark & \xmark\\
          26 & kotlin  & 7   & 0.19 & \xmark & \xmark &      52 & spring    & 20  & 0.05 & \xmark & \cmark \\
      \hline
      \end{tabular}
\caption{Summary of experimental results.}
\label{table:expts}
\vspace{-0.2in}
\end{table*}
}

{\small
\begin{table}
      \begin{tabular}{|c | c | c | c |}
      \hline
        \textsc{SafeMerge}  & \texttt{kdiff3}  & Count & Implication \\
      \hline
        \cmark & \cmark & 33 & Verified textual merge \\ 
        \cmark & \xmark & 6 & Verified manual merge \\
        \xmark & \cmark & 5  & Fail to verify textual merge \\
        \xmark & \xmark & 8  & Fail to verify manual merge \\
      \hline
      \end{tabular}
\caption{Summary of differences between \textsc{SafeMerge} and \texttt{\small kdiff3}. 
``Count'' denotes the number of instances in Table~\ref{table:expts}.}
\label{table:expts-summary}
\vspace{-0.2in}
\end{table}
}

\vspace{0.1in} \noindent
{\bf \emph{Main results.}} The results of our  evaluation are presented in~\Cref{table:expts}. 
For each benchmark,~\Cref{table:expts} shows the abbreviated name of 
the application it is taken from (column ``App''), the number of lines
of code in the merge candidate (``LOC''), the running time 
of \toolname \ in seconds (``Time''), and the results produced 
by \toolname \ and \texttt{\small kdiff3}.
Specifically, for \toolname, a checkmark (\cmark) indicates that 
it \ was able to verify semantic conflict-freedom, whereas \xmark\ means
that it produced a warning.
In the case of \texttt{\small kdiff3}, a checkmark indicates the 
absence of \emph{syntactic} conflicts. 

As we can see from~\Cref{table:expts}, \toolname\ is able to verify 
semantic conflict-freedom for 39 of the 52 benchmarks and
reports a warning for the remaining 13. 
We manually inspected these thirteen benchmarks and found eleven instances 
of an actual semantic conflict (i.e., the merge candidate is indeed 
incorrect with respect to~\Cref{def:semantic}).
The remaining two warnings are false positives caused by 
imprecision in the dependence analysis and modeling of collections.
In all, these results indicate that \toolname\ is quite precise, with
a false positive rate around 15\%. 
Furthermore, this experiment also corroborates that \toolname\ is 
practical, taking an average of 0.5~second to verify each benchmark.

Next,~\Cref{table:expts-summary} compares the results produced 
by \toolname\ and \texttt{\small kdiff3} on the 52 benchmarks used 
in our evaluation.
This comparison is very relevant because the merge candidate in these 
benchmarks matches exactly the merge produced by \texttt{\small kdiff3} 
whenever it does not report a textual conflict.
As shown in~\Cref{table:expts-summary}, 33 benchmarks 
are classified as conflict-free by both \toolname\ and \texttt{\small kdiff3}, 
meaning that \toolname\ can verify the correctness of the textual 
merge generated by \texttt{\small kdiff3} in these cases.
For instance, the merge with ID 41 in~\Cref{table:expts} corresponds precisely to the example from RxJava present in~\Cref{sec:overview} (~\Cref{fig:running-example}).
Perhaps more interestingly, we find five benchmarks for 
which \texttt{\small kdiff3}  generates a textual 
merge that is semantically incorrect according to \toolname. 
Among these five instances, two correspond (with IDs 13, 30) to the false 
positives discussed earlier, leaving us with three benchmarks 
where  the merge generated by \texttt{\small kdiff3} violates~\Cref{def:semantic} and should be further investigated by the developers.  
%

As we can see from~\Cref{table:expts-summary}, there are fourteen 
benchmarks that are syntactically conflicting according to \texttt{\small kdiff3} and 
were likely resolved manually by a developer. 
Among these, \toolname\ can verify the correctness of the merge candidate 
for six instances (spread over four different applications), thereby confirming the existence of real-world scenarios 
where syntactic conflict-freedom results in false positives. 
Finally, there are eight cases where the manual merge cannot be 
verified \toolname. 
While these examples indeed violate semantic conflict-freedom, they do 
not necessarily correspond to bugs (e.g., a developer might have 
intentionally discarded changes made by another developer).
For example, in the merge with ID 36 from~\Cref{table:expts}, both variants A and B weaken a predicate in two different ways by adding two and one additional disjuncts respectively\footnote{Merge commit \url{https://github.com/PhilJay/MPAndroidChart/commit/9531ba69895cd64fce48038ffd8df2543eeea1d2 }}.
However, the merge M only picks the weaker predicate from A, thereby effectively discarding some of the changes from variant B. 

\subsection{Evaluation on Erroneous Merge Candidates} 

\begin{table*}[!t]
\small
\begin{center}
\noindent\begin{tabular}{|l|l|c|c|}
\hline
\textbf{Name} & {\textbf{\ \ \ Description}} & \textbf{Time (s)} & {\bf Result}  \\
\hline
\rm\sc B1-kdiff3 & \ Patch gets duplicated in merge & 0.36 & \xmark \\
\rm\sc B1-manual & \ Correct version of above       & 0.38 & \cmark \\
\hline
\rm\sc B2-kdiff3 & \ Semantically same, syntactically different patches & 0.42 & \xmark \\
\rm\sc B2-manual & \  Correct version of above & 0.33 & \cmark \\
\hline
\rm\sc B3-kdiff3 & \ Inconsistent changes in assignment (conflict) & 0.34 & \xmark \\
\hline
\rm\sc B4-kdiff3 & \ Interference between refactoring and insertion (conflict) & 0.31 & \xmark \\
\hline
\rm\sc B5-kdiff3 & \ Interference between insertion and deletion (conflict)  & 0.30 & \xmark \\
\hline
\rm\sc B6-kdiff3 & \ One patch supercedes the other  & 0.32 & \xmark \\
\rm\sc B6-manual & \ Correct version of above   & 0.29 &  \cmark \\
\hline
\rm\sc B7-kdiff3 & \ Inconsistent patches due to off-by-one error (conflict) & 0.29 & \xmark \\
\hline
\end{tabular}
\end{center}
\caption{Results of our evaluation on merges generated by {\tt \small kdiff3}.}
\label{fig:buggy}
\vspace{-0.15in}
\end{table*}

In our second experiment, we explore whether \toolname\ is able to pinpoint 
erroneous merges generated by {\tt \small kdiff3}.
To perform this experiment, we consider base program with ID~$=25$ from~\Cref{table:expts} and 
generate variants by performing various kinds of mutations to the base program. 
Specifically, we design pairs of mutations that cause {\tt \small kdiff3} to generate buggy merge candidates. 

The results of this experiment are summarized in~\Cref{fig:buggy}, where the column 
labeled ``Description" summarizes the nature of the mutation. 
For each pair of variants that are semantically conflict-free, the version 
named {\sc -kdiff3} shows the incorrect merge generated by {\tt \small kdiff3}, 
where as the one labeled {\sc -manual} shows the correct merge that we 
generated manually.
For benchmarks that are semantically conflicting, we only provide results 
for the incorrect merge generated by {\tt \small kdiff3} since a correct 
merge simply does not exist.

The results from~\Cref{fig:buggy} complement those from~\Cref{subs:github}
and provide further evidence that  a widely-used merge tool 
like {\tt \small kdiff3}  can generate erroneous merges and 
that these buggy merges can be detected by our proposed technique.
This experiment  also demonstrates that \toolname \ can verify 
conflict-freedom in the manually constructed correct merges.
%

\subsection{Evaluation of Scalability and Design Choices} 
To assess the scalability of the proposed technique, we performed a third 
experiment in which we compare the running time of \toolname\ against the 
number of lines of code and number of edits. 
To perform this experiment, we start with an existing benchmark from 
the \toolname\  test suite and increase the number of lines of code 
using loop unrolling. 
We also vary the number of edits by injecting a modification in the loop body. 
This way, the number of holes in the shared program increases with each loop unrolling. 

\begin{figure}[!t]
\centering
\includegraphics[scale=0.42]{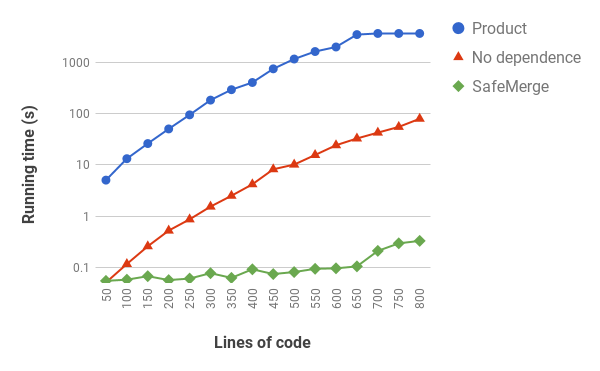}
\vspace{-0.2in}
\caption{Lines of code vs. running time. }
\label{fig:scale-loc}
\vspace{-0.15in}
\end{figure}
 
To evaluate the benefits of the various design choices that we adopt in this paper, we also compare \toolname\ with two variants of itself. In one variant, namely \emph{Product}, we model the shared program using a single hole, so each edit corresponds to one of the program versions. Essentially, this method computes the product of the four program versions using the rules from~\Cref{fig:product} and allows us to assess the benefits of representing program versions as edits applied to a shared program. In another variant called \emph{No dependence}, we do not abstract away shared program fragments using uninterpreted functions and analyze them by constructing a 4-way product. However, we still combine reasoning from different product programs in a compositional way.
 
\Cref{fig:scale-loc} compares the running time of \toolname\ against these two variants as we vary the number of lines of code but \emph{not} the number of edits. Observe that the y-axis is shown in log scale. As we can see from this plot, \toolname\ scales quite well and analyzes each benchmark in under a second. In contrast, the running time of \emph{Product} grows exponentially in the lines of code.
As expected, the \emph{No dependence} variant is better than \emph{Product} but significantly worse than \toolname. 

\begin{figure}[!t]
\centering
\includegraphics[scale=0.44]{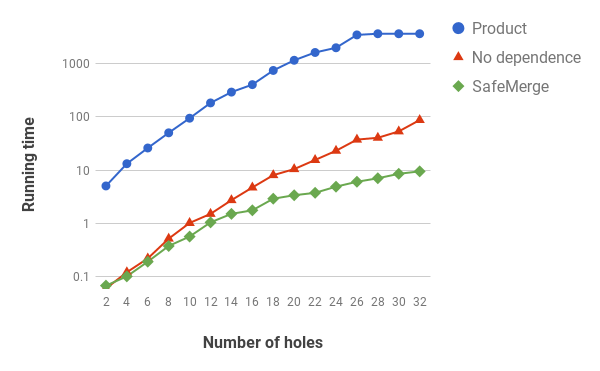}
\vspace{-0.2in}
\caption{Number of holes (edits) vs. running time. Lines of code varies between 50 and 800.}
\label{fig:scale-holes}
\vspace{-0.2in}
\end{figure}
 
Next, Figure~\ref{fig:scale-holes} compares the running time of \toolname\ against \emph{Product} and \emph{No dependence} as we vary \emph{both} the number of lines of code and the number of edits. Specifically, a benchmark containing $n$ holes contains $25n$ lines of code, and the $y$-axis shows the running time of each variant in log scale. As expected, \toolname \ is more sensitive to the number holes than it is to the number of lines of code because it abstracts away shared program fragments. However, \toolname  \ still significantly outperforms both \emph{Product} and \emph{No Dependence}. In particular, for a program with 32 edits and 800 lines of code, \toolname \ can verify semantic conflict freedom in approximately 10 seconds, while \emph{No Dependence} takes approximately 100 seconds and \emph{Product} times out.
 
In summary, this experiment shows that \toolname \ scales well as we vary the lines of code and that its running time is still feasible when program variants perform over 30 modifications to the base program in this example. This experiments also corroborates the practical importance of representing program versions as edits applied to a shared program as well as the advantage of abstracting away shared program fragments using uninterpreted functions.

\section{Related Work}
\label{section:related}
In this section, we compare our technique with prior work on program merging and  relational verification.

\noindent \vspace{-0.25in}
\paragraph{{\bf \emph{Structure-aware merge.}}} Most algorithms for program merging are textual in nature, hardly ever formally described~\cite{diff3-pierce}, and without
semantic guarantees. To improve on this situation, previous work has proposed  {\em structured} and {\em semi-structured} merge techniques to better resolve merge conflicts.
For example,  {\tt \small FSTMerge}~\cite{Apel11} uses syntactic structure to resolve conflicts between AST nodes
that can be reordered (such as method definitions), but it falls back on unstructured textual merge for other kinds of nodes. Follow-up work on {\tt \small JDime}~\cite{Apel12,Lebetaenich15} improves  the poor performance of structure-based merging by using textual-based mode (fast)  as long as no conflicts are detected, but switches to structure-based mode 
in the presence of conflicts. However, none of these techniques guarantee semantic conflict freedom.

\noindent 
\vspace{-0.25in}
\paragraph{{\bf \emph{Semantics-aware merge.}}}
Our work is inspired by earlier work on \emph{program integration}, which originated with the HPR algorithm~\cite{hpr} for checking \emph{non-interference} and generating valid merges. The HPR algorithm  was later refined by the  work of Yang et al.~\cite{Yang90}, which is one of the first attempts to incorporate semantics for merge generation.
In that context, the notion of conflict-freedom is parameterized by a classification of nodes of the variants as {\it unchanged} such that the backward slices of unchanged nodes in the two variants are equivalent modulo
a semantic correspondence. Thus, their classification algorithm is  parameterized by a semantic \emph{congruence} relation. 
Our approach tackles the slightly different  \emph{merge verification} (rather then \emph{merge generation}) problem, but
improves on these prior techniques in several dimensions: 
First,  we do not require annotations to map statements across the different versions --- this information is  computed automatically using our edit generation algorithm (Sec~\ref{sec:edit-gen}). 
Second, we show how to formulate conflict freedom  directly with verification conditions and assertion checking.
Finally, our approach performs precise, compositional reasoning about  edits by combining lightweight dependence analysis with relational reasoning using product programs. 

\noindent 
\vspace{-0.25in}
\paragraph{{\bf \emph{Relational verification.}}} Verification of conflict freedom is related to a  line of work on relational program logics~\cite{rhl,rsl,chl} and product programs~\cite{product1,product2,covac}. For instance,  Benton's Relational Hoare Logic (RHL)~\cite{rhl} allows proving equivalence between a pair of structurally similar programs. Sousa and Dillig generalize Benton's work by developing Cartesian Hoare Logic, which is used for proving $k$-safety of programs~\cite{chl}. Barthe et al. propose another technique for relational verification using \emph{product programs}~\cite{product1,product2} and apply their technique to relational properties, such as equivalence and 2-safety~\cite{2safety}.  In this work, we build on the notion of product programs used in prior work~\cite{product1,product2,covac}. However, rather than constructing a monolithic product of the four program version, we construct \emph{mini-products} for each edit. Furthermore, our proposed product construction algorithm differs from prior techniques in that it uses similarity metrics to guide synchronization and generalizes to $n$-way products.

\noindent 
\vspace{-0.25in}
\paragraph{{\bf \emph{Cross-version program analysis.}}} There has been renewed interest in program analysis techniques for answering questions about program differences across versions~\cite{LahiriFoser10}. 
Prior work on comparing closely related programs versions include {\it regression verification} that checks semantic equivalence using uninterpreted function abstraction of equivalent callees~\cite{Godlin08,Lahiri12,felsing-ase14}, mutual summaries~\cite{HawblitzelCade13,wood-esop17}, relational invariant inference to prove differential properties~\cite{LahiriFse13} and verification modulo versions~\cite{LogozzoPldi14}. 
Other approaches include static analysis for abstract differencing~\cite{Jackson94,PartushOOPSLA14}, symbolic execution for verifying assertion-equivalence~\cite{ramos-cav11} and {\it differential symbolic execution} to summarize differences~\cite{PersonFse08}. 
Our work is perhaps closest to differential assertion checking~\cite{LahiriFse13} in the use of product programs and  invariant inference. 
However, we do not require an assertion and verify a more complex property involving four different programs.
We note that bugs arising from 3-way merges could potentially also be uncovered using {\em multi-version testing}~\cite{varan:asplos15}.

\section{Conclusion and Future Work}
\label{section:conc}

We have proposed a notion of \emph{semantic conflict freedom} for 
3-way merges and described a  verification algorithm for proving this property.
Our verification algorithm analyzes the edited parts of the program in a precise way using product programs, but leverages lightweight dependence analysis to reason about program fragments that are shared between all program versions. 
Our evaluation shows that the proposed approach can verify semantic conflict-freedom for many real-world benchmarks and identify issues in problematic merges that are generated by textual 3-way merge tools.

We view this work as a first step towards precise, 
semantics-aware \emph{merge synthesis}.  In future work, we plan to explore synthesis techniques that can automatically generate correct-by-construction 3-way program merges. Since correct merge candidates should obey semantic conflict freedom, the verification algorithm proposed in this paper is necessarily a key ingredient of such semantics-aware merge synthesis tools.

\bibliography{progmerge}
\ifsupp
\newpage \ \newpage

\subsection*{Appendix A: Operational Semantics}

\begin{figure}
\[
\begin{array}{cc}
\irule{
}{\sigma \vdash {\rm skip} \Downarrow \sigma } & 
\irule{
\begin{array}{cc}
\sigma \vdash e \Downarrow c & \sigma' = \sigma[(x ,0) \mapsto c]
\end{array}
}{\sigma \vdash x := e\Downarrow \sigma' } 
\\ \ \\ 
\irule{
\begin{array}{c}
\sigma \vdash e_1 \Downarrow c_1  \ \ \sigma \vdash e_2 \Downarrow c_2 \\
 \sigma' = \sigma[(x ,c_1) \mapsto c_2]
\end{array}
}{\sigma \vdash x[e_1] := e_2\Downarrow \sigma'} &
\irule{
\begin{array}{c}
\sigma[(x,0)] = c
\end{array}
}{\sigma \vdash out(x)\Downarrow \sigma } \\ \ \\ 
\irule{
\begin{array}{c}
\sigma \vdash S_1 \Downarrow \sigma_1 \\
\sigma_1 \vdash S_2 \Downarrow \sigma_2
\end{array}
}{\sigma \vdash S_1; S_2 \Downarrow \sigma_2 }  &
\irule{
\begin{array}{c}
\sigma \vdash C \Downarrow {\rm true} \\
\sigma \vdash S_1 \Downarrow \sigma_1 \\
\end{array}
}{\sigma \vdash \ifstmt{C}{S_1}{S_2} \Downarrow \sigma_1 } \\ \ \\ 
\irule{
\begin{array}{c}
\sigma \vdash C \Downarrow {\rm false} \\
\sigma \vdash S_2 \Downarrow \sigma_2 \\
\end{array}
}{\sigma \vdash \ifstmt{C}{S_1}{S_2} \Downarrow \sigma_2 } &
\irule{
\begin{array}{c}
\sigma \vdash C \Downarrow {\rm false} 
\end{array}
}{\sigma \vdash \while{C}{S} \Downarrow \sigma } \\ \ \\ 
\irule{
\begin{array}{c}
\sigma \vdash C \Downarrow {\rm true} \\
\sigma \vdash S \Downarrow \sigma_1 \\
\sigma_1 \vdash  \while{C}{S} \Downarrow \sigma_2
\end{array}
}{\sigma \vdash \while{C}{S} \Downarrow \sigma_2 } 
\end{array}
\]
\caption{Operational semantics}\label{fig:semantics}
\end{figure}

Figure~\ref{fig:semantics} shows the operational semantics of the language from Figure~\ref{fig:lang}. Recall that $\sigma$ maps (variable, index) pairs to values, and we view scalar variables as arrays with a single valid index at 0. Since the semantics of expressions is completely standard, we do not show them here. However, one important point worth noting is the semantics of expressions involving array reads:
\[
\begin{array}{cc}
\irule{
\begin{array}{c}
\sigma \vdash e \Downarrow c \\
(a, c) \in dom(\sigma) 
\end{array}
}{\sigma \vdash a[e] \Downarrow \sigma[(a, c)]} & 
\irule{
\begin{array}{c}
\sigma \vdash e \Downarrow c \\
(a, c) \not \in dom(\sigma) 
\end{array}
}{\sigma \vdash a[e] \Downarrow \bot}  
\end{array}
\]
In other words, reads from locations that have not been initialized yield a special constant $\bot$.

\subsection*{Appendix B: Soundness of Product}

Here, we provide a proof of~\Cref{thm:product-sound}. 
The proof is by structural induction over the product construction 
rules given in~\Cref{fig:product}.
Since the two directions of the proof are completely symmetric, we 
only prove one direction. 
Note that the base case is trivial because $\vdash \stmt \gen \stmt$.

\paragraph{Rule 1.} Suppose $\sigma \vdash A \Downarrow \sigma'$ and 
$\sigma' \vdash \stmt_1; \stmt_2; \ldots; \stmt_n \Downarrow \sigma''$.  
By the premise of the proof rule and the inductive hypothesis, we have 
$\sigma' \vdash \stmt \Downarrow \sigma''$.
Thus, $\sigma \vdash A; \stmt \Downarrow \sigma''$.

\paragraph{Rule 2.} Suppose 
$\sigma \vdash (\ifstmt{C}{\stmt_{t}}{\stmt_{e}}); \stmt_1; \stmt_2; \ldots; \stmt_n \Downarrow \sigma''$. 
Without loss of generality, suppose $\sigma \vdash C \Downarrow true$, 
and suppose $\sigma \vdash \stmt_{t};\stmt_1 \Downarrow \sigma'$, 
so $\sigma' \vdash \stmt_2; \ldots; \stmt_n \Downarrow \sigma''$. 
By the first premise of the proof rule and the inductive hypothesis, 
we have $\sigma \vdash \stmt' \Downarrow \sigma''$. 
%
%
%
Hence, $\sigma \vdash \ifstmt{C}{\stmt'}{\stmt''} \Downarrow \sigma''$. 

\paragraph{Rule 3.} 
Let~$\stmt_x = \while{C_1}{\stmt_{B_1}}; \stmt_1$.
Suppose we have 
$\sigma \vdash \stmt_x; \stmt_2; \ldots; \stmt_n \Downarrow \sigma'$. 
Suppose there is exists~$\stmt_i$ that satisfies first premise of the proof rule.
Observe that $\stmt_1; \stmt_2; \ldots; \stmt_n;$ is semantically equivalent to 
$\stmt_n; \stmt_2; \ldots; \stmt_{n-1}; \stmt_1;$ as long as $\stmt_1$, $\stmt_n$, 
do not share variables between them and also with~$\stmt_2 \ldots \stmt_{n-1}$. 
Since~$\stmt_x$ and $\stmt_i$ have no shared variables between them and with any
other program~$\stmt_j$ different than~$\stmt_x$ and~$\stmt_i$, we have  

\[ \sigma \vdash  \stmt_i; \stmt_2; \ldots; \stmt_{i-1};\stmt_{i+1}; \ldots \stmt_n; \while{C_1}{\stmt_{B_1}}; \stmt_1 \Downarrow \sigma' \] 

Then, by the premise of the proof rule and the inductive hypothesis, we have $\sigma \vdash \stmt \Downarrow \sigma'$.

\paragraph{Rule 4.} Suppose we have 
$\sigma \vdash \stmt_1; \stmt_2; \ldots; \stmt_n \Downarrow \sigma''$ 
where each~$\stmt_i$ is of the form~$\while{C_i}{\stmt_{B_i}}; \stmt_i'$.
By the same reason as in Rule 3. we can move any loop in each~$\stmt_i$ 
to the beginning as they don't share any variable with any other~$\stmt_j$.
That is, considering~$H = \stmt_1; \ldots; \stmt_o$ be the set of programs
satisfying the second premise we have

\[ \sigma \vdash  \while{C_1}{\stmt_{B_1}}; \ldots \while{C_o}{\stmt_{B_o}}  \Downarrow \sigma' \] 

and considering $\stmt_{o+1}; \ldots \stmt{n}$ a sequence of the original programs excluding
the ones in~$H$ we have

\[ \sigma' \vdash  \stmt_1'; \ldots \stmt_o'; \stmt_{o+1}; \ldots \stmt{n} \Downarrow \sigma'' \] 

Then, by the last premises of the proof rule and the inductive hypothesis, 
we have that~$\sigma \vdash \stmt';\stmt'' \Downarrow \sigma''$.

\paragraph{Rule 5.} Suppose we have \[ \sigma \vdash \while{C_1}{\stmt_1}; \while{C_2}{\stmt_2}; \stmt_3; \ldots; \stmt_n \Downarrow \sigma' \] Let $W'$ be the loop $\while{C_1 \land C_2}{\stmt_1; \stmt_2}$. Since $C_1, C_2$ and  $\stmt_1, \stmt_2$ have disjoint sets of variables, the program fragment $\while{C_1}{\stmt_1}; \while{C_2}{\stmt_2}$ is semantically equivalent to $W'; R$ (where $R$ comes from the third line of the proof rule). Hence, we have $\sigma \vdash W'; R; \stmt_3; \ldots; \stmt_n \Downarrow \sigma' $. By the first premise of the proof rule and the inductive hypothesis, if $\sigma_0 \vdash \stmt_1; \stmt_2 \Downarrow \sigma_1$ for any $\sigma_0, \sigma_1$, then $\sigma_0 \vdash \stmt \Downarrow \sigma_1$. Thus, $\sigma \vdash W' \Downarrow \sigma^*$ implies $\sigma \vdash W \Downarrow \sigma^*$, which in turn implies $\sigma \vdash W; R; \stmt_3; \ldots; \stmt_n \Downarrow \sigma'$. By the last premise of the proof rule and the inductive hypothesis, we know $\sigma \vdash \stmt': \sigma'$; hence, the property holds.

\subsection*{Appendix C: Proof of Soundness of Relational Post-conditions}

The proof is  by structural induction on $\hstmt$.

\paragraph{Case 1.} $\hstmt = \hole$, and the edits are $\stmt_1, \ldots, \stmt_4$. In this case, Figure~\ref{fig:relpost} constructs the relational post-condition by first computing the product program $\stmt$ as $\stmt_1[V_1/V] \crp \ldots \crp \stmt[V_4/4]$ and then computing the standard post-condition of $\stmt$. By Theorem~\ref{thm:product-sound}, we have $\sigma \vdash \stmt \Downarrow \sigma'$ iff $\sigma \vdash \stmt_1[V_1/V] \crp \ldots \crp \stmt[V_4/4] \Downarrow \sigma'$. Furthermore, by the correctness of \emph{post} operator, we know that $\{\cond \} \stmt \{ \cond'\}$ is a valid Hoare triple. This implies $\{ \cond \} \stmt_1[V_1/V]; \ldots; \stmt_4[V_4/V] \{\cond'\}$ is also a valid Hoare triple.
\paragraph{Case 2.} $\hstmt = \stmt$ (i.e., $\hstmt$ does not contain holes). 
By the second rule in Figure~\ref{fig:relpost}, we know that $\{ \cond \} \stmt_1; \ldots; \stmt_n \{ \cond'\} $ is a valid Hoare triple. Now, consider any valuation $\sigma$ satisfying $\cond$. By the correctness of the Hoare triple,  if $\sigma \vdash  \stmt_1; \ldots; \stmt_n \Downarrow \sigma'$, we know that $\sigma'$ also satisfies  $\cond'$. Now, recall that $\stmt_1; \ldots; \stmt_n$ contains uninterpreted functions, and we assume that $F(\vec{x})$ can return any value, as long as it returns something consistent for the same input values. Let $\Sigma$ represent the set of all valuations $\sigma_i$ such that $\sigma \vdash \stmt_1; \ldots; \stmt_n \Downarrow \sigma_i$. By the correctness of the Hoare triple, we know that any $\sigma_i \in \Sigma$ satisfies $\cond'$. Assuming the correctness of the mod and dependence analysis, for any valuation $\sigma$ such that $\sigma \vdash \stmt[V_1/V]; \ldots; \stmt[V_4/V] \Downarrow \sigma'$, we know that $\sigma' \in \Sigma$. Since all valuations in $\Sigma$ satisfy $\cond'$, this implies $\sigma'$ also satisfies $\cond'$. Thus, $\{\cond\}\stmt[V_1/V]; \ldots; \stmt[V_4/V] \{\cond'\}$ is also a valid Hoare triple.

\paragraph{Case 3.} $\hstmt = \hstmt_1; \hstmt_2$. Let $\vec{\edit}_A$ denote the prefix of $\vec{\edit}$ that is used for filling holes in $\hstmt_1$, and $\vec{\edit}_B$ denote the prefix of $\vec{\edit}_1$ that is used for filling holes in $\hstmt_2$.  By the premise of the third rule and inductive hypothesis, we have \[ \{\cond\} (\hstmt_1[\edit_{A1}])[V_1/V]); \ldots; (\hstmt_1[{\edit}_{A4}])[V_4/V]) \{\cond_1\}\] as well as 
 \[  \{\cond_1\} (\hstmt_2[\edit_{B1}])[V_1/V]); \ldots; (\hstmt_2[\edit_{B4}])[V_4/V]) \{\cond_2\} \]
 Using these and the standard Hoare rule for composition, we can conclude:
 \[
 \begin{array}{lll}
  \{\cond\}  & (\hstmt_1[\edit_{A1}])[V_1/V]); \ldots; (\hstmt_1[\edit_{A4}])[V_4/V]);  & \\
   & (\hstmt_2[\edit_{B1}])[V_1/V]); \ldots; (\hstmt_2[\edit_{B4}])[V_4/V]) & \{\cond_2\}
  \end{array}
  \] 
 Since we can commute statements over different variables, this implies:
  \[
  \begin{array}{lll}
  \{\cond\} & (\hstmt_1[\edit_{A1}]; \hstmt_2[\edit_{B1}])[V_1/V]); \ldots; &
  \\
  & (\hstmt_1[\edit_{A4}]; \hstmt_2[\edit_{B4}])[V_4/V]) & 
  \{\cond_2\}
  \end{array}
  \] 
 Next, using the fact that $\hstmt[\edit_i] = (\hstmt_1; \hstmt_2)[\edit_{i}] = \hstmt_1[{\edit_{Ai}}]; \hstmt_2[{\edit_{Bi}}]$, we can conclude:
  \[ \{\cond\} (\hstmt[{\edit_1}])[V_1/V]); \ldots; (\hstmt[{\edit_4}])[V_4/V]) \{\cond_2\}\] 
  
  \paragraph{Case 4.} $\hstmt = \ifstmt{C}{\hstmt_1}{\hstmt_2}$. Let $\vec{\edit}_A, \vec{\edit}_B$ denote the prefixes of $\vec{\edit}, \vec{\edit_1}$ that is used for filling holes in $\hstmt_1$ and $\hstmt_2$ respectively. Also, let $C_i$ denote $C[V_i/V]$. By the first premise of rule 4 from Figure~\ref{fig:relpost} and the inductive hypothesis, we have:
 \[
  \{\cond \land C_1 \} \ \ (\hstmt_1[\edit_{A1}])[V_1/V]); \ldots; (\hstmt_1[\edit_{A4}])[V_4/V]) \ \ 
  \{\cond_1\}\] 
  Now, using the second premise and the inductive hypothesis, we also have:
   \[
  \{\cond \land \neg C_1 \} \ \ (\hstmt_2[\edit_{B1}])[V_1/V]); \ldots; (\hstmt_2[\edit_{B4}])[V_4/V]) \ \ 
  \{\cond_2\}\] 
  Using these two facts and the standard Hoare logic rule for if statements, we get:
  
 \[
 \begin{array}{lll}
  \{\cond \} &  C_1  ? {(\hstmt_1[\edit_{A1}])[V_1/V]); \ldots; (\hstmt_1[\edit_{A4}])[V_4/V])} : &\\ 
  & {(\hstmt_2[\edit_{B1}])[V_1/V]); \ldots; (\hstmt_2[\edit_{B4}])[V_4/V])} &
  \{\cond_1\}
  \end{array}\] 
  Now, since $\varphi$ logically entails $\bigwedge_{i,j} C_i \leftrightarrow C_j$, the statement above is equivalent to:
  \[
  \begin{array}{l}
  \ifstmt{C_1}{(\hstmt_1[\edit_{A1}])[V_1/V]}{(\hstmt_2[\edit_{B1}])[V_1/V]}; \\
  \ldots \\
   \ifstmt{C_4}{(\hstmt_1[\edit_{A4}])[V_4/V]}{(\hstmt_2[\edit_{B4}])[V_4/V]}; \\
  \end{array}
  \]
  
   Next, using the fact that $\hstmt[\edit_i] = (\ifstmt{C}{\hstmt_1}{\hstmt_2})[\edit_i] = \ifstmt{C}{\hstmt_1[{\edit}_{Ai}]}{\hstmt_2[\vec{\edit}_{Bi}]}$, we can conclude:
     \[
     \begin{array}{lll}
     \{ \cond \} &
  ((\ifstmt{C_1}{\hstmt_1}{\hstmt_2})[\edit_1])[V_1/V]; \ldots; & \\
  & ((\ifstmt{C_4}{\hstmt_1}{\hstmt_2}))[\edit_4])[V_4/V] &
 \{ \cond'\}
 \end{array}
  \]
   
  \paragraph{Case 5.} $\hstmt = \while{C}{\hstmt}$. As in case (4), let $C_i$ denote $C[V_i/V]$. From the premise of rule (5) of Figure~\ref{fig:relpost} and the inductive hypothesis, we know: 
  \[
  \{ \inv \}\ \  (\hstmt[\edit_1])[V_1/V]; \ldots (\hstmt[\edit_4])[V_4/V] \ \ \{\inv \}
  \]
  Since we also have $\varphi \models \inv$ from the premise, this implies:
  \[
  \{ \cond \}\ \  \while{C_1}{(\hstmt[\edit_1])[V_1/V]; \ldots (\hstmt[\edit_4])[V_4/V]} \ \ \{\inv \land \neg C_1\}
  \]
  Next, since we can commute statements over different variables and $\inv$ implies $\bigwedge_{ij} C[V_i/V] \leftrightarrow C[V_j/V] $, we can conclude:
  \[
  \begin{array}{lll}
  \{ \cond \} &  \while{C_1}{(\hstmt[\edit_1])[V_1/V]}; \ldots ; &\\
  & \while{C_4}{(\hstmt[\edit_4])[V_4/V]} & \{\inv \land \neg C_1\}
  \end{array}
  \]
 Finally, because  the loop $\while{C_i}{\hstmt[\edit_i]}$ is the same as $(\while{C_i}{\hstmt})[\edit_i]$, we have:

    \[
    \begin{array}{lll}
  \{ \cond \} &  (\while{C_1}{\hstmt[V_1/V]})[\edit_1]; \ldots ; &\\
  & (\while{C_4}{\hstmt[V_4/V]})[\edit_4] &
  \{\inv \land \neg C_1\}
  \end{array}
  \]
  
    \paragraph{Case 6.} First, assuming the soundness of the standard \emph{post} operator, we have $\{ \cond \} \stmt \{ post(\stmt, \cond) \}$. Using the premise of the proof rule and Theorem~\ref{thm:product-sound}, we obtain:
    \[
    \{ \cond \} \ \ (\hstmt[\edit_1^1])[V_1/V]; \ldots (\hstmt[\edit_4^1])[V_4/V] \ \ \{post(\stmt, \cond) \}
     \]
     
     Since $\edit_i^1$ is the prefix of $\edit_i$  that contains as many holes as $\hstmt$, we also know $\hstmt[\edit_i^1] = \hstmt[\edit_i]$. Thus, we get:
         \[
    \{ \cond \} \ \ (\hstmt[\edit_1])[V_1/V]; \ldots (\hstmt[\edit_4])[V_4/V] \ \ \{post(\stmt, \cond) \}
     \]



\subsection*{Appendix D: Soundness of $n$-way Diff Algorithm}

Theorem~\ref{thm:ndiff} follows directly from the following two lemmas:

\begin{lemma}
If $|\edit|  = \numHoles{\hedit}$, then $\Compose$ ensures the following post-conditions:
\begin{itemize}
\item $|\edit'| = |\hedit|$
\item  For any $\hstmt$ s.t. $\numHoles{\hstmt} = |\hedit|$, $(\hstmt[\hedit])[\edit] = \hstmt[\edit']$
\end{itemize}

\end{lemma}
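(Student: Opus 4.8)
The plan is to prove both post-conditions simultaneously by induction on the structure of $\hedit$ (equivalently, on the recursion depth of $\Compose$), with a three-way case split mirroring the branches in the definition of $\Compose$ in Algorithm~\ref{alg:editGen}. It is convenient to fix notation once: for a statement-with-holes $\hstmt$, write $h_1, h_2, \ldots$ for its holes in the depth-first order used by $\mathsf{Apply}$ in Figure~\ref{alg:apply}, so that $\applyEdit$ applied to an edit of length $\numHoles{\hstmt}$ replaces $h_i$ by the $i$-th entry; and read $\numHoles{\hedit}$ as the number of $\hole$-entries of $\hedit$ (every other entry being a hole-free statement), which is exactly what makes $\Compose$ well-typed. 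For the base case $\hedit = [\ ]$, the hypothesis $|\edit| = \numHoles{[\ ]} = 0$ forces $\edit = [\ ]$ and $\Compose([\ ],[\ ]) = [\ ]$, so $|\edit'| = 0 = |\hedit|$; and any $\hstmt$ with $\numHoles{\hstmt} = 0$ is hole-free, so $(\hstmt[\hedit])[\edit] = \hstmt = \hstmt[\edit']$ trivially.

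For the inductive step, consider first the branch $\head{\hedit} = \hole$. Then $\numHoles{\hedit} = 1 + \numHoles{\tail{\hedit}}$, so $|\edit| = \numHoles{\hedit}$ gives $|\tail{\edit}| = \numHoles{\tail{\hedit}}$, which discharges the precondition of the recursive call $\edit'' := \Compose(\tail{\hedit}, \tail{\edit})$; the IH then yields $|\edit''| = |\tail{\hedit}|$, hence $|\edit'| = |\head{\edit} :: \edit''| = |\hedit|$. For the second post-condition, fix $\hstmt$ with $\numHoles{\hstmt} = |\hedit|$ and let $\hstmt_1$ be $\hstmt$ with its first hole $h_1$ replaced by $\head{\edit}$ (a hole-free statement, since $\edit$ is an ordinary edit), so $\numHoles{\hstmt_1} = |\tail{\hedit}|$. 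Peeling the head of the edit gives $\hstmt[\edit'] = \hstmt[\head{\edit} :: \edit''] = \hstmt_1[\edit'']$, and since $\hstmt[\hedit] = \hstmt[\hole :: \tail{\hedit}]$ leaves $h_1$ as the first hole and fills $h_2, h_3, \ldots$ with $\tail{\hedit}$, applying $\edit$ to it fills $h_1$ with $\head{\edit}$ and the rest with $\tail{\edit}$; commuting the substitution into $h_1$ past those into $h_2, h_3, \ldots$ shows $(\hstmt[\hedit])[\edit] = (\hstmt_1[\tail{\hedit}])[\tail{\edit}]$, which by the IH equals $\hstmt_1[\edit'']$, i.e.\ $\hstmt[\edit']$. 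The branch $\head{\hedit} \neq \hole$ is analogous but simpler: $\head{\hedit}$ is hole-free, so $\numHoles{\hedit} = \numHoles{\tail{\hedit}}$, the precondition of $\edit'' := \Compose(\tail{\hedit}, \edit)$ holds, the IH gives $|\edit'| = |\head{\hedit} :: \edit''| = |\hedit|$, and with $\hstmt_1 := \hstmt[h_1 \mapsto \head{\hedit}]$ one has $\hstmt[\hedit] = \hstmt_1[\tail{\hedit}]$ and $\hstmt[\edit'] = \hstmt_1[\edit'']$, so the IH applied to $\hstmt_1$ gives $(\hstmt[\hedit])[\edit] = (\hstmt_1[\tail{\hedit}])[\edit] = \hstmt_1[\edit''] = \hstmt[\edit']$.

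The bulk of the work, and the only place where anything nonroutine happens, is the handful of structural facts about $\applyEdit$ that I invoked: that applying $A :: \rho$ to $\hstmt$ equals applying $\rho$ to $\hstmt$ with its first hole filled by the hole-free statement $A$; that substitutions into distinct holes commute (so the first hole can be filled before or after the remaining ones); and that prefixing a $\hole$-entry to an edit simply "skips" the first hole, leaving it in front of the holes produced further down. Each of these follows directly by unfolding the recursive definition of $\mathsf{Apply}$ in Figure~\ref{alg:apply}, using that a hole-free $A$ introduces no fresh fillable holes; once they are in hand, the remainder of each case is just bookkeeping on the lengths of the lists and the counts of holes. I would prove these $\mathsf{Apply}$-commutation facts as a short separate lemma by structural induction on $\hstmt$, and then the present lemma follows as above.
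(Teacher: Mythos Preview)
Your proposal is correct and follows essentially the same approach as the paper: induction on $\hedit$ with the three-way case split mirroring the branches of $\Compose$, invoking the same ``peel off the head'' and hole-commutation facts about $\applyEdit$. The only differences are presentational---you name the intermediate $\hstmt_1$ explicitly and isolate the $\mathsf{Apply}$ facts as a separate lemma, whereas the paper inlines them using the notation $\hstmt[\head{\edit}]$ for filling the first hole---but the substance of each case is identical.
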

\begin{proof}
Consider the two postconditions of $\Compose$. 
For the branch $\hedit = [ \ ]$, it is easy to see that $\edit' = \hedit = [ \ ]$ and thus $|\edit'| = |\hedit|$.
For any $\hstmt$ with 0 holes, applying any edits gets back $\hstmt$, satisfying the second postcondition.

For the branch  $\head{\hedit} = \hole$, we know $\numHoles{\tail{\hedit}} = \numHoles{\hedit} - 1 = |\tail{\edit}|$ (given the precondition), which satisfies the precondition of $\Compose$ at line~\ref{line:callCompose1}.
The first postcondition of the recursive call to $\Compose$ implies that size of the return value ($|\edit'|$) equals $|\head{\edit}| + |\tail{\hedit}| = 1 + |\hedit| - 1 = |\hedit|$. 
Now consider a $\hstmt$ such that $\numHoles{\hstmt} = |\hedit|$.
Let $\edit''$ be the return from the recursive call to $\Compose$.
Then $\hstmt[\edit'] = \hstmt[\head{\edit} :: \edit''] = (\hstmt[\head{\edit}])[\edit'']$ (by definition of applying an edit).
Since $\numHoles{\hstmt[\head{\edit}]} = \numHoles{\hstmt} -1 = |\tail{\hedit}|$, we know that $(\hstmt[\head{\edit}])[\edit''] = ((\hstmt[\head{\edit}])[\tail{\hedit}])[\tail{\edit}]$ (from the second postcondition of the recursive call).
Since $\head{\hedit} = \hole$ in this branch, $(\hstmt[\head{\edit}])[\tail{\hedit}] = (\hstmt[\hole :: \tail{\hedit}])[\head{\edit}] = (\hstmt[\hedit])[\head{\edit}]$.
This follows from the fact that applying $\head{\edit}$ to the first hole in $\hstmt$ followed by applying $\tail{\hedit}$ is identical to applying a hole in the first hole in $\hstmt$ followed by applying $\tail{\hedit}$, followed by applying $\head{\edit}$ which applies it to the first hole in $\hstmt$. 
Further, $((\hstmt[\hedit])[\head{\edit}])[\tail{\edit}] = (\hstmt[\hedit])[\edit]$ by the rule of applying edits, which proves this postcondition.

For the branch $\head{\hedit} \neq \hole$, we know $\numHoles{\tail{\hedit}} = \numHoles{\hedit}$. 
This along with the precondition of $\Compose$ establishes the preconditon to the call to $\Compose$ at line~\ref{line:callCompose2}.
Let $\edit''$ denote the return of the recursive call to $\Compose$. 
The recursive call ensures that $|\edit''| = |\tail{\hedit}| = |\hedit| - 1$. Thus $|\edit'| = |\head{\hedit} :: \edit''| = |\hedit|$, which establishes the first postcondition. 
Now consider a $\hstmt$ such that $\numHoles{\hstmt} = |\hedit|$.
Then $\hstmt[\edit'] = \hstmt[\head{\hedit} :: \edit''] = (\hstmt[\head{\hedit}])[\edit'']$.
Since $\numHoles{\hstmt[\head{\hedit}]} = \numHoles{\hstmt} -1 = |\tail{\hedit}|$, we know that $(\hstmt[\head{\hedit}])[\edit''] = ((\hstmt[\head{\hedit])[\tail{\hedit}])[\edit]$ (from the second postcondition of the recursive call), which simplifies to $(\hstmt[\head{\hedit} :: \tail{\hedit}])[\edit] = (\hstmt[\hedit}])[\edit]$ by the property of applying an edit. 
\end{proof}

\begin{lemma}
If $|\edit_i| = \numHoles{\hstmt}$ for all $i \in [1,\ldots,k]$ and \textsf{2Diff} satisfies the contract provided in Algorithm~\ref{alg:twoDiff}, then \textsf{GenEdit} ensures the following post-conditions:
\begin{itemize}
\item $|\edit_{i}'| = \numHoles{\hstmt'}$ for $i \in [1,\ldots,k+1]$ 
\item $\hstmt'[\edit_{k+1}'] = \stmt$ and  $\hstmt'[\edit_i'] = \hstmt[\edit_i]$ for $i \in [1,\ldots,k]$
\end{itemize}
\end{lemma}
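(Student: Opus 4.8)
The plan is to derive the statement by directly chaining the contract of \textsf{2Diff} (Algorithm~\ref{alg:twoDiff}) with the preceding lemma about \textsf{Compose}; no induction is needed at this level, since all of the inductive reasoning is already localized in the \textsf{Compose} lemma. I would write $(\hstmt', \edit, \hedit) = \textsf{2Diff}(\stmt, \hstmt)$ for the triple computed on line~\ref{line:calldiff2}, and recall that the \textsf{2Diff} contract guarantees (a)~$\hstmt'[\edit] = \stmt$, (b)~$\hstmt'[\hedit] = \hstmt$, (c)~$|\edit| = \numHoles{\hstmt'} = |\hedit|$, and (d)~$\numHoles{\hedit} = \numHoles{\hstmt}$. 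The first step is to verify that each call $\textsf{Compose}(\hedit, \edit_i)$ on line~\ref{line:callCompose} satisfies the hypothesis $|\edit_i| = \numHoles{\hedit}$ of the \textsf{Compose} lemma: this is immediate from the assumption $|\edit_i| = \numHoles{\hstmt}$ together with part~(d). Hence the \textsf{Compose} lemma applies and yields, for each $i \in [1,k]$, both $|\edit_i'| = |\hedit|$ and the identity $\hstmt_0[\edit_i'] = (\hstmt_0[\hedit])[\edit_i]$ for every $\hstmt_0$ with $\numHoles{\hstmt_0} = |\hedit|$.

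Next I would discharge the two postconditions case-wise. For the size postcondition: when $i = k+1$ we have $\edit_{k+1}' = \edit$, so $|\edit_{k+1}'| = \numHoles{\hstmt'}$ is exactly part~(c); when $i \in [1,k]$, the \textsf{Compose} lemma gives $|\edit_i'| = |\hedit|$, which again equals $\numHoles{\hstmt'}$ by part~(c). For the correctness postcondition: the case $i = k+1$ is immediate, since $\hstmt'[\edit_{k+1}'] = \hstmt'[\edit] = \stmt$ by part~(a); for $i \in [1,k]$, I would instantiate the \textsf{Compose} identity with $\hstmt_0 := \hstmt'$ — legitimate because $\numHoles{\hstmt'} = |\hedit|$ by part~(c) — to obtain $\hstmt'[\edit_i'] = (\hstmt'[\hedit])[\edit_i]$, and then rewrite $\hstmt'[\hedit]$ as $\hstmt$ using part~(b), yielding $\hstmt'[\edit_i'] = \hstmt[\edit_i]$ as desired.

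The argument is essentially bookkeeping, so I do not expect a genuine obstacle. The one point that needs care is keeping the hole counts consistent — in particular using $\numHoles{\hedit} = \numHoles{\hstmt}$ to discharge the \textsf{Compose} precondition and $\numHoles{\hstmt'} = |\hedit|$ to instantiate the \textsf{Compose} lemma at $\hstmt'$ — which is precisely why the \textsf{2Diff} specification in Algorithm~\ref{alg:twoDiff} must be phrased so as to expose all the equalities in (a)--(d), not just the two substitution identities $\hstmt'[\edit] = \stmt$ and $\hstmt'[\hedit] = \hstmt$.
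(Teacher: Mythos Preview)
Your proposal is correct and follows essentially the same route as the paper's proof: both verify the \textsf{Compose} precondition via $\numHoles{\hedit} = \numHoles{\hstmt}$, then chain the \textsf{Diff2} contract with the \textsf{Compose} lemma instantiated at $\hstmt'$. Your write-up is in fact slightly tidier than the paper's, which omits the explicit treatment of the size postcondition for $i = k+1$ and contains a typo (writing $\hstmt$ where it means $\stmt$ in the $k{+}1$ case).
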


\begin{proof}
First, the precondition $|\edit_i| = \numHoles{\hedit}$ of $\Compose$ in line~\ref{line:callCompose} is satisfied from the precondition $|\edit_i| = \numHoles{\hstmt}$ of $\GenEdit$  and the second postcondition $\numHoles{\hedit} = \numHoles{\hstmt}$ of $\DiffTwo$. 

Now, consider the postcondition $|\edit_{i}'| = \numHoles{\hstmt'}$ for $i \in [1,\ldots,k+1]$.
From the first postcondition of $\DiffTwo$ at line~\ref{line:calldiff2}, we know that $\numHoles{\hstmt'} = |\hedit|$.
For any $i \in [1,\ldots,k]$, the first postcondition of $\Compose$ at line~\ref{line:callCompose} implies $|\hedit| = |\edit_i'|$.
Together, they imply that $\numHoles{\hstmt'} = |\edit_i'|$.

The postcondition $\hstmt'[\edit_{k+1}'] = \hstmt$ follows directly from the third postcondition of $\DiffTwo$ at line~\ref{line:calldiff2} and line~\ref{line:last}. 
Now consider $\edit_i'$ for $i \in [1,\ldots,k]$.
We know from the postcondition of $\DiffTwo$ that $\numHoles{\hstmt'} = |\hedit|$. 
Therefore, from the postcondition of $\Compose$ at line~\ref{line:post2Compose} (where we substitute $\hstmt'$ for the bound variable $\hstmt$), we know that $(\hstmt'[\hedit])[\edit_i] = \hstmt'[\edit_i']$.
From the postconditon of $\DiffTwo$ at line~\ref{line:calldiff2}, we know $\hstmt'[\hedit] = \hstmt$.
Together, they imply $\hstmt[\edit_i] = \hstmt'[\edit_i']$. 
\end{proof}

\subsection*{Appendix E: Example of 4-way diff}\label{sec:editGenExample}
We illustrate the 4-way {\it diff} using a simple example:
\[
\begin{array}{lll}
O & \doteq & \ifstmt{c}{x := 1}{y := 2}; z := 3 \\
A & \doteq & \ifstmt{c}{x := 2}{y := 2};  \\
B & \doteq & \ifstmt{c}{x := 1}{y := 3}; z := 3  \\
M & \doteq & \ifstmt{c}{x := 2}{y := 3};   \\
\end{array}
\]
According to Algorithm~\ref{alg:editGen}, we start out with the shared program $\hstmt = O$ and $\edit_\orig = [ \ ]$.

Now consider the first call to $\GenEdit(\hstmt, A, \edit_\orig)$.  
After invoking $\DiffTwo(A, \hstmt)$ at line~\ref{line:calldiff2}, it returns the tuple $(\hstmt', \edit, \hedit)$ where $\hstmt' \doteq \ifstmt{c}{\hole}{y := 2}; \hole$,  $\edit \doteq [x := 2, \skipr]$ and $\hedit \doteq [x := 1, z := 3]$.
The reader can verify that $\hstmt'[\edit] = A$ and $\hstmt'[\hedit] = \hstmt = O$. 
Next, consider the call to $\Compose(\hedit, \edit_1)$ where $\edit_1 = [ \ ]$.
The call executes the branch in line~\ref{line:last} twice (since $\hedit$ does not contain any holes) and returns $\edit'$ as $\hedit$. 
Therefore, the call to $\GenEdit$ returns the tuple $(\hstmt', [x := 1, z := 3], [x := 2, \skipr])$, which constitutes $\hstmt, \edit_\orig, \edit_\vara$ for the next call to $\GenEdit$.

The next call to $\GenEdit(\hstmt, B, \edit_\orig, \edit_\vara)$ calls $\DiffTwo(B, \ifstmt{c}{\hole}{y := 2}; \hole)$ and returns  $(\hstmt', \edit, \hedit)$, where $\hstmt' \doteq \ifstmt{c}{\hole}{\hole}; \hole$ and $\edit \doteq [x := 1, y := 2, z := 3]$ (which becomes $\edit_\varb$) and $\hedit \doteq [\hole, y := 2, \hole]$. 
The reader can verify that $\hstmt'[\edit] = B$ and $\hstmt'[\hedit] = \hstmt$.
The loop at line~\ref{line:loopStart} updates $\edit_\orig$ and $\edit_\vara$ --- we only describe the latter. 
The return of $\Compose(\hedit, \edit_\vara)$ updates $\edit_\vara$ to $[x := 2, y := 2, \skipr]$ by walking the first argument and replacing $\hole$ with corresponding entry from $\edit_\vara$.
Similarly, the $\edit_\orig$ is updated by $\Compose(\hedit, \edit_\orig)$ to $[x := 1, y := 2, z := 3]$.

The final call to $\GenEdit(\hstmt, M, \edit_\orig, \edit_\vara, \edit_\varb)$ returns the tuple $(\hstmt, \edit_\orig, \edit_\vara, \edit_\varb, \edit_\cand)$, where $\hstmt, \edit_\orig, \edit_\vara, \edit_\varb$ remain unchanged (since $\hstmt$ already contains holes at all the changed locations), and $\edit_\cand$ is assigned $[x := 2, y := 3, \skipr]$. 
The reader can verify that $\hstmt[\edit_\orig] = O, \hstmt[\edit_\vara] = A, \hstmt[\edit_\varb] = B, \hstmt[\edit_\cand] = M$.

\subsection*{Appendix F: An abstract implementation of $\DiffTwo$}
\label{sec:two-way-diff}

\begin{algorithm}[htbp]
\caption{Algorithm for 2-way AST Diff}\label{alg:twoDiff}
\begin{algorithmic}[1]
\Procedure{$\DiffTwo$}{$\stmt, \hstmt$}
\State {\rm \bf Input:} A program $\stmt$ and a shared program $\hstmt$
\State {\rm \bf Output:} Shared program $\hstmt'$ and edits $\edit, \hedit$
\State {\rm \bf Ensures:} $|\edit| = |\hedit| = \numHoles{\hstmt'}$ 
\State {\rm \bf Ensures:} $\numHoles{\hedit} = \numHoles{\hstmt}$
\State {\rm \bf Ensures:} $\hstmt'[\edit] = \stmt$, $\hstmt'[\hedit] = \hstmt$ 

\If{$\hstmt = \hole$} \label{line:matchHole}
\Return ($\hole, [\stmt], [\hstmt]$)
\ElsIf {$\hstmt = \stmt$} \label{line:matcheq}
\Return ($\stmt, [\ ], [\ ]$)
\ElsIf {*} \label{line:nondet1}
\Return $\DiffTwo(\stmt;\skipr, \hstmt)$ \Comment{Non-deterministic $\skipr$ \  introduction}
\ElsIf {*} \label{line:nondet2}
\Return $\DiffTwo(\skipr;\stmt, \hstmt)$ \Comment{Non-deterministic $\skipr$ \  introduction}
\ElsIf {*} \label{line:nondet3}
\Return $\DiffTwo(\stmt, \hstmt;\skipr)$ \Comment{Non-deterministic $\skipr$ \ introduction}
\ElsIf {*} \label{line:nondet4}
\Return $\DiffTwo(\stmt, \skipr; \hstmt)$ \Comment{Non-deterministic $\skipr$ \ introduction}
\ElsIf {$\stmt = \stmt_1 ; \stmt_2$ and $\hstmt = \hstmt_1 ; \hstmt_2$}
\State $(\hstmt_1', \edit_1, \hedit_1) := \DiffTwo(\stmt_1, \hstmt_1)$
\State $(\hstmt_2', \edit_2, \hedit_2) := \DiffTwo(\stmt_2, \hstmt_2)$
\State \Return $(\hstmt_1' ; \hstmt_2', \edit_1 :: \edit_2, \hedit_1 :: \hedit_2)$
\ElsIf {$\stmt = \ifstmt{C}{\stmt_1}{\stmt_2}$ and $\hstmt = \ifstmt{C'}{\hstmt_1}{\hstmt_2}$ and $C = C'$}
\State $(\hstmt_1', \edit_1, \hedit_1) := \DiffTwo(\stmt_1, \hstmt_1)$
\State $(\hstmt_2', \edit_2, \hedit_2) := \DiffTwo(\stmt_2, \hstmt_2)$
\State \Return $(\ifstmt{C}{\hstmt_1'}{\hstmt_2'}, \edit_1 :: \edit_2, \hedit_1 :: \hedit_2)$
\ElsIf {$\stmt = \while{C}{\stmt_1}$ and $\hstmt = \while{C'}{\hstmt_1}$ and $C = C'$}
\State $(\hstmt_1', \edit_1, \hedit_1) := \DiffTwo(\stmt_1, \hstmt_1)$
\State \Return $(\while{C}{\hstmt_1'}, \edit_1 , \hedit_1)$
\Else  \label{line:nondet5}
\State \Return ($\hole, [\stmt], [\hstmt]$)
\EndIf 
\EndProcedure
\end{algorithmic}
\end{algorithm}

Algorithm~\ref{alg:twoDiff} describes $\DiffTwo$ algorithm for computing the 2-way diff.
It takes as input a program $\stmt$ and a program with holes $\hstmt$ and returns the shared program with holes $\hstmt'$ and edits $\edit$ and $\hedit$, such that $\hstmt'[\edit] = \stmt$ and $\hstmt'[\hedit] = \hstmt$.
Since $\hstmt$ may contain holes, the edit $\hedit$ may contain holes. 
The algorithm recursively descends down the structure of the two programs and tries to identify the common program and generate respective edits for the differences.
We use non-deterministic conditional to abstract from actual heuristics to match parts of the two ASTs.
For example, when matching $\stmt$ with $\hstmt_1; \hstmt_2$, a heuristic may decide to match $\stmt$ with $\hstmt_1$ and create a shared program $\hole;\hole$ and edits $\edit = [\stmt, \skipr]$, $\hedit = [\hstmt_1,\hstmt_2]$; 
it may also choose to match $\stmt$ with $\hstmt_2$ and create a shared program $\hole;\hole$ and edits $\edit = [\skipr, \stmt]$, $[\hstmt_1,\hstmt_2]$.  
The decision is often based on algorithms based on variants of {\it longest-common-subsequence}~\cite{longest-common-seq-cite}. 
However, these decisions only help maximize the size of the shared program, and do not affect the soundness of the edit generation.
Lines~\ref{line:nondet1} to ~\ref{line:nondet4} allow us to model all such heuristics by non-deterministically inserting $\skipr$ statemnets before or after a statement. 
Line~\ref{line:nondet5} ensures that the diff procedure can always return by constructing the trivial shared program $\hole$ and $\stmt$ and $\hstmt$ as the respective edits.  
Line~\ref{line:matchHole} checks if $\hstmt$ is a hole, then the shared program is a hole $\hole$ and the two edits contain $\stmt$ and $\hstmt$ respectively.
Line~\ref{line:matcheq} is the case when $\stmt$ equals $\hstmt$.
We use $=$ to denote the syntactic equality of the two syntax trees.
The remaining rules are standard and recurse down the AST structure and match the subtrees.


\fi

\end{document}